\def\d{\mathrm{d}}
\def\laweq{\buildrel \mathrm{d} \over =}
\DeclareMathOperator*{\argmax}{arg\,max}
\DeclareMathOperator*{\argmin}{arg\,min}
\newcommand{\X}{\mathcal {X}}
\newcommand{\var}{\mathrm{var}}
\newcommand{\VaR}{\mathrm{VaR}}
\newcommand{\ES}{\mathrm{ES}}
\newcommand{\E}{\mathbb{E}}
\newcommand{\R}{\mathbb{R}}
\newcommand{\N}{\mathbb{N}}
\newcommand{\p}{\mathbb{P}}
\newcommand{\id}{\mathds{1}}
\renewcommand{\ge}{\geqslant}
\renewcommand{\le}{\leqslant}
\renewcommand{\geq}{\geqslant}
\renewcommand{\leq}{\leqslant}
\renewcommand{\epsilon}{\varepsilon}
\newcommand{\esssup}{\mathrm{ess\mbox{-}sup}}
\newcommand{\essinf}{\mathrm{ess\mbox{-}inf}}
\theoremstyle{plain}
\newtheorem{theorem}{Theorem}
\newtheorem{proposition}{Proposition}
\theoremstyle{definition}
\newtheorem{definition}{Definition}
\newtheorem{example}{Example}
\theoremstyle{remark}
\newtheorem{remark}{Remark}
\newcommand{\cet}{\begin{center}}
	\newcommand{\ecet}{\end{center}}
\begin{document}

	\captionsetup[figure]{labelfont={ },name={Figure},labelsep=period}
	\captionsetup[table]{labelfont={},name={Table},labelsep=period}
	\title{Diversification quotients  based on VaR and ES}
	
	\author{ Xia Han \thanks{School of Mathematical Sciences and LPMC, Nankai University, China.
			\texttt{xiahan@nankai.edu.cn}} 
		\and Liyuan Lin\thanks{Department of Statistics and Actuarial Science, University of Waterloo, Canada.   \texttt{l89lin@uwaterloo.ca}}
		\and Ruodu Wang\thanks{Department of Statistics and Actuarial Science, University of Waterloo, Canada.   \texttt{wang@uwaterloo.ca}}}
	\date{\today}
	\maketitle
	
	\begin{abstract}
 The diversification  quotient (DQ) is recently introduced for  quantifying  the degree of diversification of a stochastic portfolio model.  It has an axiomatic foundation and can be defined through  a parametric class of risk measures.   Since the Value-at-Risk (VaR) and the Expected Shortfall (ES) are the most prominent risk measures widely used in both banking and insurance,   we investigate  DQ constructed from VaR and ES in this paper.   In particular,  for the popular models of  elliptical and multivariate regular varying  (MRV) distributions, explicit formulas are available. 
The   portfolio optimization  problems for the elliptical  and MRV models are  also  studied.
  Our results further
  reveal favourable features of DQ, both theoretically and practically, compared to traditional diversification indices based on a single risk measure.
		
		\textbf{Keywords}: Value-at-Risk, Expected Shortfall, diversification quotient,  elliptical models, regular varying models
	\end{abstract}

	\section{Introduction}
 In order to mitigate risks in portfolios of financial investment quantitatively, a common approach is to compute a quantitative index of the portfolio model, based on e.g., the volatility, variance, an expected utility or a risk measure, following the seminal idea of \cite{M52} on portfolio diversification.  
 In the literature, one of the most prominent examples of the diversification index based on   a general risk measure   is defined by \cite{T07}  which is referred as diversification ratio (DR). \cite{CC08} investigated the theoretical and empirical properties of DR  in portfolio construction and compared the behavior of the resulting portfolio to  common, simple strategies. 
 See \cite{EWW15} 
and  \cite{KD22} for theories of DR and other diversification indices. 
 \cite{BDI08} defined a closely related notion of DR which is called the diversification gain and explored various methods of modeling dependence and their
influence on diversification gain.

Different from the traditional diversification indices such as DR  in the above literature,  \cite{HLW22} proposed six axioms -- non-negativity, location
invariance, scale invariance, rationality, normalization and continuity -- which jointly characterize   a new diversification index, called the diversification quotient (DQ),  whose definition is based on a class of risk measures decreasing in an index $\alpha$. 
All commonly used risk measures belong to a monotonic parametric family, and this includes VaR, ES, expectiles, mean-variance, and entropic risk measures.  They argued that DQ has many appealing features both theoretically and practically,  while these properties, in particular the six axioms above, are not shared by DR based  on VaR, ES, or any other commonly used risk measure. {Moreover,  portfolio optimization of DQs based on VaR and ES can be
computed very efficiently, and thus DQ can be easily applied to real data.} 

Most properties of DQ  are studied by \cite{HLW22} for a general class of  risk  measures. In this paper, we focus on specific risk measures, in particular, the
 Value-at-Risk (VaR) and the Expected Shortfall  (ES). {Even though VaR has been  criticized because of its lack of subadditivity and ES   requires the loss to have a finite mean,   VaR and ES  are still the two most common classes of risk measures in practice, widely employed in global banking and insurance regulatory frameworks; see Basel III/IV (\cite{BASEL19}) and Solvency II (\cite{E11}).   More  theoretical properties and discussions  of    VaR and  ES    can be found in,  e.g., \cite{ADEH99}, \cite{EPRWB14,ELW18}, \cite{EKT15} and the references therein.}   
We pay particular attention to two popular models in finance and insurance, namely, 
  elliptical and  multivariate regular variation (MRV)  distributions. 
  Elliptical distributions, including normal and t-distributions as special cases, are the most standard tools for quantitative risk management (\cite{MFE15}). 
  They have been studied for DR with convenient properties; see \cite{CTYZ22} and the references therein. 
  The MRV model is widely used in 
  Extreme Value Theory for investigating the portfolio diversification; see, e.g., \cite{MR10}, \cite{ME13} and \cite{BMWW16}. 
  
 This paper is an extension of  \cite{HLW22} in which  an axiomatic framework  of diversification indices is proposed and general properties of DQ are  studied. As a new concept of diversification index,  studying  properties  such as explicit formulas and  limiting behavior of DQ under    specific risk measures and special risk models  will help us to better understand and use DQ in risk management applications. In addition, the advantages of DQ and the connection between DQ and DR are clearer under the elliptical and MRV  models, revealing many attractive features of choosing DQ instead of DR to quantify diversification risk, especially for tail heaviness and common shocks.

The paper is organized as follows. In Section \ref{sec:2}, the definition of DQ  and some preliminaries on risk measures are collected. In Section \ref{VaR-ES}, we study general properties for  DQs based on VaR and ES.   Since DQs  based on VaR and ES have  natural ranges 
of  $[0, n]$  and  $[0, 1]$, respectively, some special dependence  structures of the portfolio that correspond to the special values of $0$, $1$, and $n$ are constructed with clear interpretation for values in between (Theorem \ref{th:var-01n}). In Section \ref{large-port}, we focus on  DQ for large portfolios. By the Law of Large Numbers, we show that DQs based VaR and ES  for a portfolio with independent components tend to 0 as the number of assets in the portfolio increases to infinity (Theorem \ref{pro:ind}). The limits for DQs based on VaR and ES for   portfolios with exchangeable components  do not necessarily tend to 0. We show that the upper bound for the limit decreases in  the bivariate correlation coefficient. (Proposition \ref{prop:exchangeable}).  
In Section \ref{sec:4},  DQ is applied to  elliptical models;  explicit formulas  and the limiting behavior of DQs based on VaR and ES are available (Proposition \ref{prop:comp_Dvar} and Theorem \ref{thm:MRV}). 
 Moreover,  we present several  numerical results for  the two most important elliptical distributions used in finance and insurance, namely the multivariate normal distribution and the multivariate t-distribution, and show that DQ can properly capture tail heaviness. 
 As a popular tool for modeling heavy-tailed phenomena,  MRV models  for DQ are studied in Section \ref{sec:MRV}. 
 Furthermore, we generalize the results to the optimal portfolio selection problem  in Section \ref{sec:5}. Under elliptical models, the optimization problem can boil down to a well-studied problem (see e.g.,  \cite{CC08}) and a limiting result in MRV models is also derived (Theorem \ref{th:opt_elli} and Proposition \ref{prop:lim}). 
 We conclude the paper in Section  \ref{sec:6}.

	\section{Diversification quotients}\label{sec:2}
	
 
	 Throughout this paper,  $(\Omega,\mathcal F,\p)$ is an atomless probability. The atomless assumption in our context  is very weak   and  it is widely used in statistics and risk management; see  \cite{D02} and  Section A.3 of \cite{FS16}  for details of atomless probability spaces.  Almost surely equal random variables are treated as identical.  A risk measure  $\phi$ is a mapping  from $\X$ to $\R$, where $\X$ is  a convex cone of random variables  on  $(\Omega,\mathcal F,\p)$ representing losses faced by a financial institution or an investor, and $\X$ is assumed to include all constants (i.e., degenerate random variables). For $p\in (0,\infty)$, denote by $L^p=L^p(\Omega,\mathcal F,\p)$ the set of all random variables $X$ with $\E[|X|^p]<\infty$ where $\E$ is the expectation under $\p$. Furthermore, $L^\infty=L^\infty(\Omega,\mathcal F,\p)$ is the space of all essentially bounded random variables, and $L^0=L^0(\Omega,\mathcal F,\p)$ is the space of all random variables.  Write $X\sim F$  if the random variable $X$ has the distribution function $F$  under $\mathbb{P}$, and 
	$X \laweq  Y$ if two random variables $X$ and $Y$ have the same distribution. 
	We always write $\mathbf X=(X_1,\dots,X_n)$ and  $\mathbf 0 $ for the $n$-vector of zeros. 
	Further, denote by $[n]=\{1,\dots,n\}$, $\R_+=[0,\infty)$  and $\overline\R=[-\infty,\infty]$.  Terms such as increasing or decreasing functions are in the non-strict sense. For $X\in \X$, $\esssup (X)$ and $\essinf(X)$ are  the essential supremum and the essential infimum of $X$, respectively. 
	
	 A \emph{diversification index} $D$ is a mapping from  $\X^n$ to $\overline\R$, which is used to  quantify the magnitude of diversification of a risk vector $\mathbf X\in\X^n$ representing portfolio losses.
	Our convention is that  a smaller value of $D(\mathbf X)$   represents a stronger diversification. 
	Measuring diversification is closely related to risk measures.  Some standard properties of a risk measure $\phi:\X\to \R$ are collected below.
\begin{itemize}
\item[{[M]}] Monotonicity: $\phi(X)\le \phi(Y)$ for all $X,Y \in \mathcal X$ with $X\le Y$.
\item[{$[\mathrm{CA}]$}] Constant additivity:  $\phi(X+c)=\phi(X)+ c$ for all  $c\in \R$ and $X\in \X$.
\item[{$[\mathrm{PH}]$}] Positive homogeneity:  $\phi(\lambda X)=\lambda \phi(X)$ for all $\lambda \in (0, \infty)$ and $X\in \X$.
\item[{[SA]}] Subadditivity:  $\phi(X+Y)\le \phi(X)+\phi(Y)$ for all $X,Y\in \X$.
\end{itemize}

The two popular classes of risk measures in banking and insurance practice are VaR and ES.  The VaR at level $\alpha \in [0,1)$ is defined as$$
	\VaR_\alpha(X)=\inf\{x\in \R: \p(X\le x) \ge 1-\alpha\},~~~X\in L^0,
	$$
	and the ES (also called CVaR, TVaR or AVaR) at level $\alpha \in (0,1)$ is defined as
	$$
	\ES_{\alpha}(X) = \frac 1 \alpha \int_{0}^\alpha \VaR_\beta(X) \d \beta,~~~X\in L^1,
	$$
	and  $\ES_0(X)=\esssup(X)=\VaR_0(X)$ which may be $\infty$.
	The probability level $\alpha$ above is typically very small, e.g., $0.01$ or $0.025$ in \cite{BASEL19}; note that we use the  ``small $\alpha$"  convention as in \cite{HLW22}.  Both VaR and ES  satisfy the  properties [M], [CA] and [PH], while ES also satisfies the property [SA].

 To measure diversification {quantitatively},  a new index, called diversification quotient (DQ), is introduced as follows.

	\begin{definition}[\cite{HLW22}] Let $\rho=\left(\rho_{\alpha}\right)_{\alpha \in I}$ be a class of risk measures indexed by $\alpha \in I=$ $(0, \bar{\alpha})$ with $\bar{\alpha} \in(0, \infty]$ such that $\rho_{\alpha}$ is decreasing in $\alpha$. For $\mathbf{X} \in \mathcal{X}^{n}$, the diversification quotient based on the class $\rho$ at level $\alpha \in I$ is defined by
		$$
		\operatorname{DQ}_{\alpha}^{\rho}(\mathbf{X})=\frac{\alpha^{*}}{\alpha}, \quad \text { where } \alpha^{*}=\inf \left\{\beta \in I: \rho_{\beta}\left(\sum_{i=1}^{n} X_{i}\right) \leq \sum_{i=1}^{n} \rho_{\alpha}\left(X_{i}\right)\right\}
		$$
		with the convention $\inf (\varnothing)=\bar{\alpha}.$\label{def:1}
	\end{definition}

 \begin{remark}
The value of $\mathrm{DQ}^\rho_\alpha$ depends on how the class $\rho=(\rho_\alpha)_{\alpha \in I}$ is parametrized. For instance, one could, hypothetically, use a different parametrization $\VaR'_\alpha=\VaR_{\alpha^2}$ for the class VaR, although there is no real reason to do so. The value of $\mathrm{DQ}^{\VaR'}_\alpha$ is generally different from $\mathrm{DQ}^\VaR_{\alpha^2}$, but they generate the same order; that is,
 $\mathrm{DQ}^{\VaR'}_\alpha(\mathbf X) \le \mathrm{DQ}^{\VaR'}_\alpha(\mathbf Y) $
 if and only if
  $\mathrm{DQ}^\VaR_{\alpha^2} (\mathbf{X}) \le \mathrm{DQ}^\VaR_{\alpha^2} (\mathbf {Y})$, which can be checked by definition.
Therefore, different parametrizations do not affect the application of DQ in portfolio optimization.
\end{remark}

\citet[Theorem 1]{HLW22} characterized a subclass of DQ via six axioms: non-negativity, location
invariance, scale invariance, rationality, normalization and continuity; such DQs are defined on the class of risk measures satisfying [M], [CA] and [PH]. 
DQ is defined based on a monotonic parametric class of risk measures. All commonly used risk measures belong to a monotonic parametric family; for instance, this includes VaR, ES,  expectiles, mean-variance, and entropic risk measures; see \cite{FS16} for a general treatment of risk measures.


 In finance and insurance, the risk measures VaR and ES play  prominent roles, as they are  specified in regulatory documents such as \cite{BASEL19} and \cite{E11}. We will focus on VaR or ES as the risk measures  assessing diversification by DQ in this paper.  In particular, both VaR and ES  satisfy the  properties [M], [CA] and [PH],  and hence 
	 $\mathrm{DQ}^{\VaR}_\alpha$ and  $\mathrm{DQ}^{\ES}_\alpha$ satisfy the six above axioms.
  
Another popular diversification index is the diversification ratio (see e.g., \cite{T07} and \cite{EWW15}), defined as  \begin{equation}\label{eq:DR}
		{\rm DR}^{\phi}(\mathbf X)= \frac{ \phi \left(\sum_{i=1}^n X_i\right)}{ \sum_{i=1}^n \phi(X_i)},
	\end{equation}
	where $\phi$ is a suitably chosen risk measure, such as   $\VaR_\alpha$, $\ES_\alpha$, var  or SD. 
	Although DR generally does not satisfy some of the six axioms, we will compare DQ and DR in several parts of the paper.

	\section{DQ based on VaR and ES}\label{VaR-ES}
	
	In this section,   we will focus on the theoretical properties of ${\rm DQ}^{\VaR}_{\alpha}$ and ${\rm DQ}^{\ES}_{\alpha}$. For VaR and ES, the  interval   in Definition \ref{def:1} has a natural range of $I=(0,1)$. Similarly to \cite{HLW22}, we let $\X^n$ be $(L^0)^n$ when we discuss ${\rm DQ}^{\VaR}_{\alpha}$ and $(L^1)^n$ when we discuss ${\rm DQ}^{\ES}_{\alpha}$. 
	To compute ${\rm DQ}^{\ES}_\alpha$, we first define the {superquantile transform} (\citet[Example 4]{LSW21}).
 The term ``superquantile" is an alternative name for ES; see \cite{RRM14}.  
	
	\begin{definition}
	    	The \emph{superquantile transform}   of a distribution $F$ with finite mean
	is  a distribution $\widetilde F$ with quantile function $p\mapsto \ES_{1-p}(X)$ for $p\in(0,1)$, where $X\sim F$. 
	\end{definition}
The following alternative formulas for DQs based on VaR and ES  will be useful later. They are shown in Theorem 3 of \cite{HLW22}.
  For a given $\alpha \in (0,1)$, ${\rm DQ}^{\VaR}_\alpha $ and ${\rm DQ}^{\ES}_\alpha$ can be computed by
		\begin{equation} {\rm DQ}^{\VaR}_\alpha (\mathbf X)=\frac{1-   F \left(\sum_{i=1}^n \VaR_{\alpha}(X_i)\right) }{\alpha}
			\mbox{~~and~~}
			{\rm DQ}^{\ES}_\alpha (\mathbf X)=\frac{1- \widetilde F \left(\sum_{i=1}^n \ES_{\alpha}(X_i)\right) }{\alpha} , \label{eq:superquantile}
		\end{equation}
		where $F$ is the distribution of $\sum_{i=1}^n X_i$ and $\widetilde F$ is the {superquantile transform} of $F$.  

\begin{remark}\label{rem:superquantile}
 Let  $S =\sum_{i=1}^n X_i$. 
If $S$ has a continuous and strictly monotone quantile function, then
\eqref{eq:superquantile} can be rewritten  as
$$
{\rm DQ}^{\VaR}_\alpha (\mathbf X)  =\frac{1}{\alpha} \p\left(S> \sum_{i=1}^n \VaR_{\alpha}(X_i)\right), ~~~\mathbf X\in \X^n,$$
and
$$
{\rm DQ}^{\ES}_\alpha (\mathbf X) =\frac{1}{\alpha} \mathbb Q\left(S> \sum_{i=1}^n \ES_{\alpha}(X_i)\right),~~~\mathbf X\in \X^n, 
$$
for some probability measure $\mathbb Q$. 
To give a formula for $\mathbb Q$, let $F$ be the distribution of $S$, and $\alpha_0=1-F(\E[S]).$
There exists an increasing and continuous function $g:(0,1)\to [0,1]$
such that $\ES_{g(\alpha)}(S)=\VaR_\alpha(S)\mbox{~for all }\alpha \in(0,\alpha_0)$ and $g(\alpha)=1$ for $\alpha \in [\alpha_0,1)$.
We can express  $\mathbb Q$  by 
$
 {\d \mathbb Q}/{\d \p} = g'(1-F(S)).
$ 
\end{remark}

\begin{remark}
DQ based on $\ES$ admits another convenient formula in \citet[Theorem 3]{HLW22}. 
  If $\p(\sum_{i=1}^n  X_i>\sum_{i=1}^n \ES_\alpha(X_i) )>0$, then 
 \begin{equation}\label{eq:op_ES}{\rm DQ}^{\ES}_\alpha(\mathbf X)= \frac{1}{\alpha}\min_{r\in (0,\infty)} \E\left[\left(r \sum_{i=1}^n (X_i-\ES_\alpha(X_i))+1\right)_+\right],
 \end{equation}
and otherwise ${\rm DQ}^{\ES}_\alpha(\mathbf X)=0.$
 The main advantage of this formula of $\mathrm{DQ}^\ES_\alpha$ is computation and optimization. In particular, this formula allows us to write the portfolio optimization problem of $\mathrm{DQ}^\ES_\alpha$ as a convex program; this is shown in Proposition 5 of \cite{HLW22}.
\end{remark}

Next, we see that if $\alpha\in(0,1/n)$,  there are three special values of ${\rm DQ}^{\VaR}_\alpha$,
	which are $0$, $1$ and $n$, corresponding to different representative dependence structures.
	The last value of $n$
	is based on a useful inequality
	\begin{equation}\label{eq:e1p} \VaR_{n\alpha} \left(\sum_{i=1}^n X_i\right) \le \sum_{i=1}^n \VaR_{\alpha}(X_i) \end{equation}
	from Corollary 1 of \cite{ELW18},
	and its sharpness is stated in Corollary 2 therein. 
	For ${\rm DQ}^{\ES}_\alpha$, there are two special numbers, $0$ and $1$,  because ES is a class of  subadditive risk measures.
	As a natural question, we wonder for what types of dependence structures these special values are attained. Next, we address this question.
	
	We first present the concept of risk concentration in \cite{WZ20} which will be useful to understand the dependence structures corresponding to special  values of ${\rm DQ}^{\VaR}_\alpha$ and ${\rm DQ}^{\ES}_\alpha$.

	\begin{definition}[Tail event and $\alpha$-concentrated]
		Let $X$ be a random variable and $\alpha \in(0,1)$.
		\begin{itemize}
			\item[(i)] A tail event of  $X$ is an event $A \in \mathcal{F}$ with $0<\p(A)<1$ such that $X(\omega)\ge X(\omega')$ holds for a.s.~all  $\omega \in A$ and $\omega' \in A^c$, where $A^c$ stands for the complement of $A$.
			\item[(ii)] A random vector $(X_1,\dots,X_n)$ is $\alpha$-concentrated if its component share a common tail event of probability $\alpha$.\footnote{\cite{WZ20} used the ``large $\alpha$" convention, and hence our $\alpha$-concentration corresponds to their $(1-\alpha)$-concentration.}
		\end{itemize}
	\end{definition}

	Theorem 4 of \cite{WZ20} gives that a random vector  $(X_1,\dots, X_n)$  is $\alpha$-concentrated for all $\alpha \in (0,1)$ if and only if it is comonotonic, and hence the dependence notion  of $\alpha$-concentration is weaker than comonotonicity. 
A random vector $(X_1,\dots,X_n)$ is \emph{comonotonic}
if there exists a random variable $Z$ and increasing functions $f_1,\dots,f_n$ on $\R$ such that $X_i=f_i(Z)$ a.s.~for every $i\in [n]$.

 	We first address the case that ${\rm DQ}^{\VaR}_{\alpha}(\mathbf X)=n$, which involves the dependence concepts of both risk concentration and mutual exclusivity (see \cite{DD99}).
	Thus, to arrive at the maximum value of ${\rm DQ}^{\VaR}_{\alpha}(\mathbf X)=n$, one requires a dependence structure that is a combination of positive and negative dependence. This phenomenon is common in problems in VaR aggregation; see \cite{PW15} for extremal dependence concepts. 
	For this purpose, we propose 
	the \emph{$\alpha$-concentration-exclusion} ($\alpha$-CE)  model for $\alpha \in (0,1/n)$, which is a random vector $\mathbf X\in \X^n$ satisfying   four conditions: 

\begin{enumerate}[(i)]
    \item 	$\p\left(X_i>\VaR_{\alpha}(X_i)\right)=\alpha$;
    \item 
	$\p(X_i\ge \VaR_{\alpha}(X_i))\ge n\alpha$;
	\item $\{X_i>\VaR_{\alpha}(X_i)\}$, $i\in [n]$, are mutually exclusive;
	\item $(X_1,\dots,X_n)$ are $(n \alpha)$-concentrated. 
\end{enumerate}
 For a class $\rho$ of risk measures $\rho_\alpha$ decreasing in $\alpha$,  we say that  $\rho$ is \emph{non-flat from the left} at $(\alpha,X)$ if   $\rho_{\beta}(X)>\rho_{\alpha}(X)$ for all $\beta \in(0, \alpha)$, and 
 $\rho$ is \emph{left continuous} at $(\alpha,X)$ if
$\alpha\mapsto \rho_\alpha(X)$ is left continuous.


	\begin{remark}\label{rem:alpha_CE}
		For any given $X\in L^0$, if $\VaR$ is non-flat from the left at $(n\alpha,X)$, then there exists $\alpha$-CE random vector $\mathbf X \in \X^n$
    		such that $\sum_{i=1}^n X_i=X$.  For instance,  let $A=\left\{X>\operatorname{VaR}_{n \alpha}(X)\right\}$. As VaR is non-flat from the left at $(n\alpha,X)$, we have $\mathbb{P}(A)=n \alpha$. Let $(A_1, \ldots, A_n)$ be a partition of $A$ with $\mathbb{P}\left(A_i\right)=\alpha$ for $i\in[n]$. Also, let $X_i=(X-m) {\bf 1}_{A_i}$ for $i\in[n-1]$ and $X_n=(X-m) \mathbf{1}_{\left\{A_n \cup A^c\right\}}+m$
where $m=\operatorname{VaR}_{n \alpha}(X)$ is a constant. It follows that $\sum_{i=1}^n X_i=X$, and it is clear that $\mathbf{X}=$ $\left(X_1, \ldots, X_n\right)$ is an $\alpha$-CE model; such a construction is essentially the one in \citet[Theorem 2]{ELW18}.  More generally, we give a sufficient condition for $\mathbf X$ to satisfy the $\alpha$-CE model. A random vector $(X,Y)$ is said to be {counter-monotonic} if $(X,-Y)$ is comonotonic. 
		If  each pair $(X_i,X_j)$ is counter-monotonic for $i\ne j$,
		and  for each $i\in [n]$, $\p(X_i
		>\VaR_\alpha(X_i))=\alpha$ and $\VaR_\alpha(X_i)=\essinf(X_i)$,
 then $\mathbf X$ follows an $\alpha$-CE model.  For recent results on pairwise counter-monotonicity, see \cite{LLW23}.
	\end{remark}

	In  the next result, we summarize several dependence structures that correspond to special values $0$, $1$ and $n$ of  ${\rm DQ}^{\VaR}_\alpha$ and 
	the special values $0$ and $1$ of ${\rm DQ}^{\ES}_\alpha$.  
\begin{theorem}\label{th:var-01n}
 For $\alpha \in (0,1)$ and $n\ge 2$,  the following  hold:
  \begin{enumerate}[(i)]
\item   $\left\{\mathrm{DQ}^{\VaR}_{\alpha}(\mathbf{X}) \mid\mathbf{X} \in \X^n\right\}=[0,\min\{n,1/\alpha\}]$ and $\left\{\mathrm{DQ}^{\ES}_{\alpha}(\mathbf{X})\mid\mathbf{X} \in \X^n\right\}=[0,1]$.
  \item  For $\rho$ being $\VaR$ or $\ES$, $ {\rm DQ}^{\rho}_\alpha (\mathbf X)=0$ 
  if and only if $\sum_{i=1}^ n X_i\le \sum_{i=1}^ n \rho_\alpha(X_i)$ a.s.
In case $\sum_{i=1}^ n X_i$ is a constant, $ {\rm DQ}^{\VaR}_\alpha (\mathbf X)=0$ if $\alpha <1/n$ and  $ {\rm DQ}^{\ES}_\alpha (\mathbf X)=0$.
  \item For $\rho$ being $\VaR$ or $\ES$, if $\mathbf X$ is $\alpha$-concentrated,  then  $ {\rm DQ}^{\rho}_\alpha (\mathbf X)\le 1$.
  If, in addition, $\rho$ is continuous and non-flat from the left at $(\alpha,\sum_{i=1}^n X_i)$, then $ {\rm DQ}^{\rho}_\alpha (\mathbf X)=1$. 
    \item If $\alpha < 1/n$ and $\mathbf X$ has an $\alpha$-CE model, then 
 $ {\rm DQ}^{\VaR}_\alpha (\mathbf X)=n$ and $ {\rm DQ}^{\ES}_{n\alpha} (\mathbf X)=1$.
 
  \end{enumerate}
 \end{theorem}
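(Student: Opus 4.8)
The plan is to prove the three extremal statements (ii)--(iv) first and then assemble the range in (i) from them together with a continuity argument. Throughout write $S=\sum_{i=1}^n X_i$ and $c=\sum_{i=1}^n \rho_\alpha(X_i)$, and recall $\mathrm{DQ}^\rho_\alpha(\mathbf X)=\alpha^*/\alpha$ with $\alpha^*=\inf\{\beta:\rho_\beta(S)\le c\}$. For (ii), since $\rho_\beta$ is decreasing in $\beta$, $\alpha^*=0$ holds iff $\rho_\beta(S)\le c$ for every $\beta\in(0,1)$; letting $\beta\downarrow 0$ and using $\VaR_\beta(S),\ES_\beta(S)\to\esssup(S)$, this is equivalent to $\esssup(S)\le c$, i.e.\ $S\le c$ a.s. The converse is immediate from [M] and [CA], since $S\le c$ gives $\rho_\beta(S)\le\rho_\beta(c)=c$ for all $\beta$. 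For the constant-sum case I apply this equivalence: for ES, subadditivity gives $S=\ES_\alpha(S)\le\sum\ES_\alpha(X_i)$; for VaR with $\alpha<1/n$, \eqref{eq:e1p} at level $n\alpha$ gives $S=\VaR_{n\alpha}(S)\le\sum\VaR_\alpha(X_i)=c$.

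For (iii), let $A$ be a common tail event with $\p(A)=\alpha$. The defining inequality $\essinf_{A}X_i\ge\esssup_{A^c}X_i$ forces $\VaR_\alpha(X_i)=\esssup_{A^c}X_i$, so $X_i\le\VaR_\alpha(X_i)$ on $A^c$ and $X_i\ge\VaR_\alpha(X_i)$ on $A$. Summing gives $S\le c$ on $A^c$ and $S\ge c$ on $A$, whence $\p(S>c)\le\p(A)=\alpha$ and $\mathrm{DQ}^{\VaR}_\alpha\le1$ through \eqref{eq:superquantile}. For ES I use the representation $\ES_\alpha(X_i)=\alpha^{-1}\E[X_i\id_A]$ valid for a tail event of probability $\alpha$; since $A$ is also a tail event of $S$, summing yields the additivity $\ES_\alpha(S)=\sum\ES_\alpha(X_i)=c$, so $\mathrm{DQ}^{\ES}_\alpha\le1$. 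In both cases $\rho_\alpha(S)=c$ (continuity rules out a downward quantile jump for VaR), so $\alpha\in\{\beta:\rho_\beta(S)\le c\}$ and $\alpha^*\le\alpha$; non-flatness from the left then gives $\rho_\beta(S)>\rho_\alpha(S)=c$ for $\beta<\alpha$, forcing $\alpha^*=\alpha$ and $\mathrm{DQ}^\rho_\alpha=1$.

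Part (iv) is the crux. First I combine conditions (i) and (ii) of the $\alpha$-CE model: $\p(X_i>\VaR_\alpha(X_i))=\alpha$ together with $\p(X_i\ge\VaR_\alpha(X_i))\ge n\alpha$ pins the $(1-n\alpha)$-quantile to the same location, giving $\VaR_{n\alpha}(X_i)=\VaR_\alpha(X_i)$. Let $B$ be the common tail event of probability $n\alpha$ from (iv) and $A_i=\{X_i>\VaR_\alpha(X_i)\}$. On $B$ each $X_i\ge\VaR_{n\alpha}(X_i)=\VaR_\alpha(X_i)$ and on $B^c$ each $X_i\le\VaR_\alpha(X_i)$, so $S\ge c$ on $B$ and $S\le c$ on $B^c$. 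Because $B$ is a tail event of each $X_i$, $A_i\subseteq B$; mutual exclusivity (iii) with $\p(A_i)=\alpha$ gives $\p(\bigsqcup_i A_i)=n\alpha=\p(B)$, so $B=\bigsqcup_i A_i$ up to a null set. On each $A_i$ we have $X_i>\VaR_\alpha(X_i)$ strictly and $X_j\ge\VaR_\alpha(X_j)$ for $j\ne i$, hence $S>c$; therefore $\{S>c\}=B$ and $\p(S>c)=n\alpha$. Since $\VaR_\beta(S)\le c$ iff $\p(S\le c)\ge1-\beta$ iff $\beta\ge n\alpha$, we get $\alpha^*=n\alpha$ and $\mathrm{DQ}^{\VaR}_\alpha(\mathbf X)=n$. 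The ES assertion follows by applying (iii) at level $n\alpha$, as $\mathbf X$ is $(n\alpha)$-concentrated.

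Finally, for (i) the upper bounds are immediate: $\alpha^*\le\bar\alpha=1$ gives $\mathrm{DQ}^{\VaR}_\alpha\le1/\alpha$, \eqref{eq:e1p} gives $\mathrm{DQ}^{\VaR}_\alpha\le n$ when $n\alpha<1$, and subadditivity gives $\mathrm{DQ}^{\ES}_\alpha\le1$. The endpoint $0$ is realized by (ii), the endpoints $n$ (when $\alpha<1/n$) and $1$ (for ES) by (iv) and (iii), and the remaining VaR endpoint $1/\alpha$ (when $\alpha\ge1/n$) by any configuration with $\sum\VaR_\alpha(X_i)\le\essinf(S)$, which forces $\alpha^*=1$. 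To fill in the intermediate values I would fix the marginals and deform the dependence continuously between a $\mathrm{DQ}=0$ configuration and a maximal one; using $\mathrm{DQ}^{\VaR}_\alpha=\alpha^{-1}\p(S>c)$ from \eqref{eq:superquantile} (and the $\mathbb Q$-formula of Remark~\ref{rem:superquantile} for ES) shows that $\mathrm{DQ}$ varies continuously along the deformation, so the intermediate value theorem supplies the whole interval. I expect the main obstacle to be part (iv): deriving $\VaR_{n\alpha}(X_i)=\VaR_\alpha(X_i)$ and the a.s.\ identity $B=\bigsqcup_i A_i$ rigorously requires careful bookkeeping of quantile ties and atoms. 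Moreover the equality cases for ES in (iii)--(iv) genuinely rest on the non-flatness hypothesis: for $\VaR$ the value $\alpha^*$ is pinned by $\p(S>c)=n\alpha$ and needs no extra regularity, whereas for $\ES$ a distribution flat on its top $n\alpha$ tail can make $\alpha^*$ strictly smaller, so non-flatness must be verified rather than assumed.
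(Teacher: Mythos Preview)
Your treatment of (ii)--(iv) is essentially the paper's argument, but carried out self-containedly rather than by citation to \cite{WZ20}: you reproduce the tail-event identities $\VaR_\alpha(X_i)=\esssup_{A^c}X_i$ and $\ES_\alpha(X_i)=\alpha^{-1}\E[X_i\id_A]$ that the paper imports, and your handling of (iv) (deducing $\VaR_{n\alpha}(X_i)=\VaR_\alpha(X_i)$, showing $A_i\subseteq B$ and $\p(\bigsqcup A_i)=\p(B)$, hence $\{S>c\}=B$) matches the paper's line for line. Your closing remark that the ES equality in (iv) silently uses non-flatness of $\beta\mapsto\ES_\beta(S)$ at $n\alpha$ is well taken; the paper simply asserts this ``fact'' without deriving it from the $\alpha$-CE hypotheses.

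The genuine gap is in part (i), the surjectivity onto the stated intervals. Your plan is to fix marginals and ``deform the dependence continuously'' between a $\mathrm{DQ}=0$ configuration and a maximal one, invoking the intermediate value theorem. This does not work as written. First, achieving $\mathrm{DQ}^{\VaR}_\alpha(\mathbf X)=0$ with given marginals requires $S\le c$ a.s., while achieving $\mathrm{DQ}^{\VaR}_\alpha(\mathbf X)=n$ via an $\alpha$-CE model forces $\p(X_i\ge\VaR_\alpha(X_i))\ge n\alpha$, i.e.\ an atom of mass at least $(n-1)\alpha$ at the $\alpha$-quantile; it is not clear (and generally false) that a single choice of marginals supports both endpoints simultaneously. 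Second, even granting compatible marginals, you have not specified the deformation or verified that $\p(S>c)$ (or its $\mathbb Q$-analogue for ES) varies continuously along it; the $\mathbb Q$-formula you cite from Remark~\ref{rem:superquantile} assumes $S$ has a continuous strictly monotone quantile, which you cannot maintain while interpolating through atomic distributions.

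The paper avoids this entirely with two explicit constructions. For VaR, it takes any $\mathbf X$ with $\mathrm{DQ}^{\VaR}_\alpha(\mathbf X)=a$, shifts so that $\VaR_\alpha(X_i)=0$, and sets $\mathbf Z=\mathbf X\id_A$ with $A$ independent of $\mathbf X$ and $\p(A)=p$; the convex-level-set property of $\VaR_\alpha$ forces $\VaR_\alpha(Z_i)=0$, and then $\mathrm{DQ}^{\VaR}_\alpha(\mathbf Z)=\alpha^{-1}\p(\sum Z_i>0)=p\,a$, so every value in $[0,a]$ is hit by varying $p$. For ES, the paper takes $X_1,X_2$ uniform on $[-1,1]$ coupled (via joint mixability, \cite{WW16}) so that $X_1+X_2$ is uniform on $[-t,t]$, and computes $\mathrm{DQ}^{\ES}_\alpha=\alpha^{-1}\bigl(1-(2-2\alpha)/t\bigr)_+$, which sweeps $[0,1]$ as $t$ ranges over $[0,2]$. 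Both arguments are short and constructive; your deformation idea would need substantial extra work to reach the same conclusion.
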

\begin{proof}
	
(i) We first prove the case of $\VaR$. By Corollary 1 of \cite{ELW18},  we have
$$ \VaR_{n\alpha} \left(\sum_{i=1}^n X_i\right) \le \sum_{i=1}^n \VaR_{\alpha}(X_i),  $$
which implies 
 $\alpha^{*} \le n \alpha$,  and hence $\mathrm{DQ}^{\VaR}_{\alpha}(\mathbf{X})\le n$. By definition, $\alpha^* \in [0,1]$, and hence $0\le \mathrm{DQ}^{\VaR}_{\alpha}(\mathbf{X})\le 1/\alpha$. To summarize, $ \left\{\mathrm{DQ}^{\VaR}_{\alpha}(\mathbf{X}) \mid\mathbf{X} \in \X^n\right\}\subseteq [0,\min\{n,1/\alpha\}]$.

Next, we show that every point in the interval $ [0,\min\{n,1/\alpha\}]$ is attainable by $\mathrm{DQ}^\VaR_\alpha$. 
Take any $\mathbf X \in \X^n$
and let $a=\mathrm{DQ}^{\VaR}_\alpha (\mathbf X)$.
Since $\mathrm{DQ}^{\VaR}_\alpha$ satisfies [LI], we can replace each component $  X_i$ of $\mathbf X$ with $X_i - \VaR_\alpha(X_i)$ for $i\in [n]$. 
Hence, it is safe  to assume that  $\VaR_\alpha$ of each component of $\mathbf X$  is $0$. 
Let $\mathbf{Z}= \mathbf{X}\id_{A}$ where $A \in \mathcal F$ is independent of $\mathbf{X}$ and $\p(A)=p\in (0,1)$.
Since the mapping $F\mapsto \VaR_\alpha(X)$ where $X\sim F$ has convex level sets (e.g., \cite{G11}), 
$\VaR_\alpha$ of each component of $\mathbf Z$ is $0$. 
 By \eqref{eq:superquantile}, we have
$$
\begin{aligned}
\mathrm{DQ}^{\VaR}_\alpha (\mathbf Z)
 =\frac{1}{\alpha} \p\left(\sum_{i=1}^n Z_i> 0 \right) 
&=\frac{p }{\alpha}  \p\left(\sum_{i=1}^n X_i >  0 \right)  =p \mathrm{DQ}^{\VaR}_\alpha (\mathbf X).
\end{aligned}$$
Since $p\in (0,1)$ is arbitrary, any point in $[0,a]$ belongs to the range of $\mathrm{DQ}^\VaR_\alpha$. 
To complete the proof, it suffices to construct $\mathbf X$ such that $\mathrm{DQ}^\VaR_\alpha(\mathbf X)=\min\{n,1/\alpha\}$.

In case $\alpha\ge 1/n$, let $\mathbf X$ follow an $n$-dimensional multinomial distribution with parameters $(1/n,\dots,1/n)$. 
It is clear that $\sum_{i=1}^n X_i=1$. 
Since $\alpha \ge 1/n$, then $\VaR_\alpha(X_i)=0$. In this case, by \eqref{eq:superquantile}, $\mathrm{DQ}^\VaR_\alpha(\mathbf X)=1/\alpha$. 
In case  $\alpha< 1/ n$, we can find $\mathbf X$ satisfying  $\mathrm{DQ}^\VaR_\alpha(\mathbf X)=n$, which is constructed in part  (iv) of the proof  below.

Next, we prove  the case of   $\ES$.
Since ES satisfies [SA], the range of $\mathrm{DQ}^\ES_\alpha $ is contained in $[0,1]$. 
Take any $t\in [0,2]$, and let each of $X_1$ and $X_2$ follow a uniform distribution on $[-1,1]$ such that $X_1+X_2$ is uniformly distributed on $[-t,t]$. The existence of such $(X_1,X_2)$ is shown by Theorem 3.1 of \cite{WW16}.  Let $X_i=0$ for $i= 3,\dots,n$.
We can easily compute
$
\ES_\alpha(X_1)=\ES_\alpha(X_2)=1-\alpha
$
and 
$
\ES_\beta(X_1 +X_2)=t(1-\beta).
$
Hence, 
$$\mathrm{DQ}^\ES_\alpha(X_1,\dots,X_n) = \frac 1 \alpha 
\inf \{\beta \in (0,1): t(1-\beta)\le 2-2\alpha\} = \frac 1 \alpha  \left(1-\frac{2-2\alpha}{t}\right)_+.
$$
For letting $t$ vary in $[0,2]$, we get that every point in $[0,1]$ is attained by $\mathrm{DQ}^\ES_\alpha$.

			(ii)  The first part follows directly from Theorem 2 (i) of \cite{HLW22}.  In particular,  if $\sum_{i=1}^n X_i$ is a constant, we have  $\VaR_{0}\left(\sum_{i=1}^n X_i\right)=\VaR_{n\alpha}\left(\sum_{i=1}^n X_i\right)\leq\sum_{i=1}^n\VaR_{\alpha}( X_i)$ for $\alpha<1/n$, and $\ES_{0}\left(\sum_{i=1}^n X_i\right)=\ES_{\alpha}\left(\sum_{i=1}^n X_i\right)\leq\sum_{i=1}^n\ES_{\alpha}( X_i)$. Thus, we have $\mathrm{DQ}_\alpha^\ES(\mathbf{X})=0$ if $\alpha<1/n$ and  $\mathrm{DQ}_\alpha^\ES(\mathbf{X})=0.$
   
			(iii) By Theorem 6 in \cite{WZ20},  if  $\mathbf{X}$ is $\alpha$-concentrated,  we have
			$$
			\VaR_{\alpha}\left(\sum_{i=1}^{n} X_{i}\right) \le \sum_{i=1}^{n} \VaR_{\alpha}\left(X_{i}\right),
			$$ which implies $\alpha^*\leq\alpha$ and  then $\mathrm{DQ}_\alpha^\VaR(\mathbf X)\leq1.$  Further, as $\VaR$ is continuous and non-flat from the left at $(\alpha, \sum_{i=1}^{n} X_{i})$, by Theorem 6 in \cite{WZ20},   the inequality above is  an equality. Thus, we have $\alpha^{*}=\alpha$, which leads to ${\rm DQ}_{\alpha}^{\VaR}(\mathbf{X})=1$. Moreover, from Theorem 5 of \cite{WZ20}, we know that $\ES_{\alpha}\left(\sum_{i=1}^{n} X_{i}\right)=\sum_{i=1}^{n} \ES_{\alpha}\left(X_{i}\right)$ if $\left(X_{1}, \dots,  X_{n}\right)$ is $\alpha$-concentrated. Combining with the fact that $\ES_\alpha(\sum_{i=1}^n X_i)$ is  non-flat from left at $(\alpha,\mathbf X)$, we have $ {\rm DQ}^{\ES}_\alpha (\mathbf X)= 1$.
			
			(iv) 
			As $X_1,\dots,X_n$ are $(n \alpha)$-concentrated, there exists an event $B$ such that $B$ is a tail event for all $X_i$ and $\p(B)=n \alpha$. Let $B_i=\{X_i>\VaR_{\alpha}(X_i)\}$.
			By Lemma A.3 of \cite{WZ20}, we have $\{X_i>\VaR_{n\alpha}(X_i)\}\subseteq B$. As $\VaR_{\alpha}(X_i)\ge \VaR_{n \alpha}(X_i)$,  it gives $B_i\subseteq B$ for all $i\in [n]$.
			From $\mathbb{P}(X_i\geq \VaR_\alpha(X_i))\geq n\alpha$,  we know that  $X_i(\omega)\ge \VaR_{\alpha}(X_i)$ for all $\omega \in B$. Further, as $B_1,\dots,B_n$ are mutually exclusive, we have $X_i(\omega)>\VaR_{\alpha}(X_i)$ and $X_j(\omega)=\VaR_{\alpha}(X_j)$ for all $\omega \in B_i$ and $j \neq i$.
			Hence, for all $\omega \in \bigcup_{i=1}^n B$, we have  $\sum_{i=1}^n X_i(\omega)> \sum_{i=1}^n \VaR_{\alpha}(X_i)$
			while $\sum_{i=1}^n X_i(\omega)\le \sum_{i=1}^n \VaR_{\alpha}(X_i)$  for $\omega \in \left(\bigcup_{i=1}^n A_i\right)^c=\bigcap_{i=1}^n A_i^c$. Therefore, if $\alpha<1/n$,
			$$\p\left(\sum_{i=1}^n X_i>\sum_{i=1}^n \VaR_{\alpha} (X_i)\right)=\p\left(\bigcup_{i=1}^n B_i\right)=\sum_{i=1}^n \p(B_i)=n\alpha.$$ By \eqref{eq:superquantile}, we have ${\rm 
				DQ}_{\alpha}^{\VaR}(\mathbf X)=n$.
			
			For the case of ES,  as $X_1,\dots,X_n$ are $(n \alpha)$-concentrated, by Theorem 5 of  \cite{WZ20},  we have $\ES_{n\alpha}\left(\sum_{i=1}^{n} X_{i}\right)=\sum_{i=1}^{n} \ES_{n\alpha}\left(X_{i}\right)$. Together with the fact that $\beta\mapsto \ES_\beta \left(\sum_{i=1}^{n} X_{i}\right)$ is strictly decreasing at $\beta=n\alpha$, we get  that ${\rm DQ}_{n\alpha}^{\ES}(\mathbf X)=1$.
		\end{proof}

	Note that comonotonicity is stronger than $\alpha$-concentration, and hence it is a sufficient condition for
 (iii) in Theorem \ref{th:var-01n}   replacing $\alpha$-concentration. 
 
 In summary, both ${\rm DQ}^{\VaR}_\alpha$ and ${\rm DQ}^{\ES}_\alpha$
take values on a bounded interval. In contrast, the diversification ratio ${\rm DR}^{\VaR_\alpha}$ is unbounded,  and  ${\rm DR}^{\ES_\alpha}$  is  bounded above by $1$ only  when the ES of the total risk is non-negative.  The continuous ranges of DQs also give more information on diversification. Moreover, similarly to the continuity axiom of preferences (e.g., \cite{FS16}), a bounded interval can  provide mathematical convenience for applications.
The values of DQs are simple to interpret.
To be specific,
for ${\rm DQ}^{\VaR}_\alpha$, its value is $0$ if there is a very good hedge in the sense of Theorem \ref{th:var-01n}  (ii);
its value is $1$ if there is strong positive dependence such as comonotonicity, and
its value is $n$ if there is strong negative dependence conditional on the tail event. For ${\rm DQ}^{\ES}_\alpha$,
 its value is $0$ if there is a very good hedge in the sense of  Theorem \ref{th:var-01n} (ii) and
its value is $1$ if there is strong positive dependence such as comonotonicity or $\alpha$-concentration.	

\section{Diversification for large portfolios}
\label{large-port}
In this section, we will focus on the asymptotic behavior of DQ for large portfolios.  First, since the independent portfolio is widely recognized as an effectively diversified portfolio, we anticipate that ${\rm DQ}$ for this type of portfolio would be close to zero as $n$ tends to $\infty$. 


{\begin{theorem}\label{pro:ind}
Let  $X_1, X_2, \dots $ be a sequence of uncorrelated random variables in $L^2$. Assume $\sup_{i\in \mathbb N} \var(X_i) < \infty$ and $\inf_{i\in \mathbb N} \{\rho_{\alpha}(X_i)-\E[X_i]\} >0$. 
 For $\alpha\in(0,1)$ and  $\rho$ being $\VaR$ or $\ES$, 
\begin{equation}\label{eq:indep}
\lim_{n \to \infty} \mathrm{DQ}_\alpha^{\rho}(X_1, \dots, X_n)=0.
\end{equation}
\end{theorem}
\begin{proof}
Let $\mathbf X_n=(X_1, \dots, X_n)$ and $S_n=\sum_{i=1}^n X_i$. As $\mathrm{DQ}^\rho_\alpha$ is location invariant, we  can assume that $\E[X_i]=0$ for $i=1, 2, \dots$. Hence, by the $L^2$-Law of Large Numbers in the form of \citet[Theorem 2.2.3]{D19}, we have $S_n/n \buildrel L^2 \over \rightarrow 0$. (In fact, $L^1$ convergence is sufficient to prove our result.)

We first prove the case of VaR. Note that $S_n/n \buildrel L^2 \over \rightarrow 0$ implies $\lim_{n \to \infty} \p(S_n/n>x)=0$ for all $x>0$.
Let  $\epsilon = \inf_{i\in \mathbb N} \{\rho_{\alpha}(X_i)-\E[X_i]\} $.  As $\VaR_{\alpha}(X_i)>\epsilon$, $i=1,2, \dots$, we have
$$\p\left(S_n>\sum_{i=1}^n \VaR_\alpha(X_i)\right)\le \p\left(S_n/n>\epsilon\right)\to 0.$$
Thus, $\lim_{n\to \infty} \p(S_n>\sum_{i=1}^n \VaR_\alpha(X_i))=0$.
By   \eqref{eq:superquantile}, we have 
\begin{align*}
\lim_{n\to \infty}\mathrm{DQ}^{\VaR}(\mathbf X_n)=\lim_{n \to \infty}\frac{1}{\alpha}\p\left(S_n>\sum_{i=1}^n \VaR_\alpha(X_i)\right)=0.
\end{align*}

Next, we prove the case of $\ES$.  As $\ES$ is a convex distortion  risk measure, $\ES$ is $L^1$-continuous (see \citet[Corollary 7.10]{R13}). Further, since  $\ES_\beta(0)=0$,  we have $\ES_\beta(S_n/n) \to 0$  as $n\to\infty$ for all $\beta \in (0,1)$. As a result, for every $\beta \in (0,1)$, there exists $N_\beta$ such that $\ES_\beta(S_n/n)<\epsilon$ for all $n>N_\beta$. Therefore, we have  
$$\alpha^*=\inf\left\{\beta \in (0,1):\ES_\beta(S_n)\le \sum_{i=1}^n \ES_\alpha(X_i)\right\}\le \inf\left\{\beta \in (0,1):\ES_\beta(S_n/n)\le \epsilon)\right\} \to 0$$
as $n \to \infty$.
Hence, we have $\mathrm{DQ}^\ES_\alpha(\mathbf X_n)=\alpha^*/\alpha \to 0$ as $n \to \infty$.
\end{proof}
Note that  Theorem \ref{pro:ind} does not imply that all independent portfolios are good hedges, because \eqref{eq:indep}  holds under some assumptions. In  case  the components of the portfolio  have very heavy tails,   DQ based on VaR 
can be close to $n$ even if the individual losses are iid, as we will see in Theorem \ref{thm:MRV} below.

\begin{remark}\label{rem:iid}
In the special case that $X_1, X_2,  \dots$ are iid,  Theorem \ref{pro:ind} implies that,  if $\rho_\alpha(X_1)>\E[X_1]$, we have
$$
\lim_{n \to \infty} \mathrm{DQ}_\alpha^{\rho}(X_1, \dots, X_n)=0
$$  for $\rho$ being $\VaR$ or $\ES$.
\end{remark}}

Next, we focus on portfolios with exchangeable components, which may represent a homogeneous subgroup of assets from a large asset pool. 
An infinite sequence of random variables $X_1, X_2, \dots$  is said to be exchangeable if $(X_1, \dots, X_n)\laweq (X_{\pi(1)}, \dots, X_{\pi(n)})$ for all $n\ge 2$ and $\pi \in {\mathfrak S_n}$, where ${\mathfrak S_n}$ is the set of permutations of $[n]$.
Exchangeability is closely related to iid sequence of random variables  due to de Finetti's theorem, which says that any infinite exchangeable sequence is conditionally iid.
However, for the exchangeable portfolio, the value of DQ does not necessarily converge to $0$ as $n$ goes to infinity.
 By  the Birkhoff–Khinchin theorem (see \cite{AK49}),
if $\E[\vert X_1\vert]<\infty$, we have $ \sum_{i=1}^n X_i/n \to \E[X_1|\mathcal G]$ a.s.~for some sub-$\sigma$-algebra  $\mathcal G \subseteq \mathcal F$. By \eqref{eq:superquantile}, we get
$$\mathrm{DQ}^{\VaR}_\alpha(X_1, \dots, X_n) \to \frac{1-F\left(\VaR_{\alpha}(X_1)\right)}{\alpha} ~~~\mbox{as}~~~ n \to \infty,
$$
and 
$$\mathrm{DQ}^{\ES}_\alpha(X_1, \dots, X_n) \to \frac{1-\widetilde F\left(\ES_{\alpha}(X_1)\right) }{\alpha}~~~\mbox{as}~~~ n \to \infty,
$$ 
where $F$ is the distribution of $\E[X_1|\mathcal G]$   and $\widetilde F$ is  the superquantile transform of $F$.


The above formulas depend on $\mathcal G$ which may not be explicit. In the next proposition,  we  derive an upper bound  on the limit.
\begin{proposition}\label{prop:exchangeable}
   Let $X_1$, $X_2, \dots$  be a sequence of exchangeable random variables in $L^2$. Denote by  $\mu=\E[X_1]$, $\sigma^2=\var(X_1)$ and $r=\mathrm{corr}(X_1, X_2)$. For $\alpha\in(0,1)$ and $\rho$ being $\VaR$ or $\ES$, if $\rho_\alpha(X_1)>\mu$,  then
\begin{align} 
\label{eq:r1-exchange} \lim_{n\to \infty}
\mathrm{DQ}^{\rho}_\alpha(X_1, \dots, X_n) \le  \frac{1}{\alpha}\frac{r \sigma^2}{r \sigma^2+(\rho_\alpha(X_1)-\mu)^2}
.\end{align}
\end{proposition}
\begin{proof}
Let $S_n=\sum_{i=1}^n X_i$. As $(X_1, \dots, X_n)$ is exchangeable, we have $\E[S_n]=n\mu$ and $\var(S_n)=(n+n(n-1)r)\sigma^2$. The mean and variance of $S_n$ imply  the bound 
$$\rho_\beta(S_n)\le n\mu+\sigma\sqrt{n+n(n-1)r}\sqrt{\frac{1-\beta}{\beta}}$$
for all $\beta \in (0,1)$; see Table 1 of \cite{LSWY18}.
As a result, we have
\begin{align*}
{\rm DQ}^\rho_\alpha(X_1, \dots, X_n)&\le \frac{1}{\alpha}\inf\left\{\beta\in (0,1): n\mu+\sigma\sqrt{n+n(n-1)r}\sqrt{\frac{1-\beta}{\beta}}\le n \rho_\alpha(X_1)\right\}\\&= \frac{1}{\alpha}\frac{\frac{1+(n-1)r}{n}\sigma^2}{\frac{1+(n-1)r}{n}\sigma^2+(\rho_\alpha(X_1)-\mu)^2}.
\end{align*}
Sending $n \to\infty$, we get the desired result.
\end{proof}
The upper bound \eqref{eq:r1-exchange} on $\lim_{n\to \infty}
\mathrm{DQ}^{\rho}_\alpha(X_1, \dots, X_n) $ in Proposition \ref{prop:exchangeable} decreases as the correlation $r$ between  assets decreases. Intuitively, this means that less positive dependence leads to greater diversification.   In particular, if $r\downarrow 0$, then $\lim_{n\to \infty}
\mathrm{DQ}^{\rho}_\alpha(X_1, \dots, X_n) \to 0$. The upper bound \eqref{eq:r1-exchange} holds true also without exchangeability, as long as the average of the bivariate correlations of assets converges to $r$ and all assets are identically distributed.


	

	

	\section{Elliptical   models}\label{sec:4}
	
	The most commonly used classes of multivariate distributions are the elliptical models which  include the multivariate normal    and   t-distributions as  special cases.  For a general treatment of elliptical models  in risk management, see \cite{MFE15}.  In this section, we   study DQs based on VaR and ES for elliptical models.  %
	\subsection{Explicit formulas for DQ}
	\label{sec:ellip} 
	
	A random vector $\mathbf{X}$ is \emph{elliptically distributed} if
	its characteristic function can be written as
	$$
	\begin{aligned}
		\psi(\mathbf{t}) =\mathbb{E}\left[\exp \left(\texttt{i} \mathbf{t}^\top \mathbf{X}\right)\right] & 
		=\exp \left(\texttt{i} \mathbf{t}^\top \boldsymbol{\mu}\right) \tau\left(\mathbf{t}^\top \Sigma \mathbf{t}\right),
	\end{aligned}
	$$
	for some  $\boldsymbol{\mu}\in \mathbb{R}^{n}$, positive semi-definite matrix $ \Sigma\in \R^{n\times n}$,
	and $\tau: \mathbb{R}_{+} \rightarrow \mathbb{R}$ called the characteristic generator.
	We denote this distribution
	by $ \mathrm{E}_{n}(\boldsymbol{\mu}, \Sigma, \tau).
	$ We will assume that $\Sigma$ is not a  matrix of zeros.
	Each marginal distribution of an elliptical distribution is a one-dimensional elliptical distribution with the same characteristic generator.
	The most common examples of elliptical distributions are normal and t-distributions.
	An $n$-dimensional t-distribution  $\mathrm  t(\nu,\boldsymbol{\mu},\Sigma) $ with $\nu>0$ has density function $f$ given by (if $|\Sigma| > 0$)
	$$f(\mathbf x)= {{\frac {\Gamma \left((\nu +n)/2\right)}{\Gamma (\nu /2)\nu ^{n/2}\pi ^{n/2}\left|{{\Sigma }}\right|^{1/2}}}\left(1+{\frac {1}{\nu }}({\mathbf {x} }-{\boldsymbol {\mu }})^{\top}{ {\Sigma }}^{-1}({\mathbf {x} }-{\boldsymbol {\mu }})\right)^{-(\nu +n)/2}},$$
	where $\Gamma$ is the gamma function and $|\Sigma|$ is the determinant of the  dispersion matrix $\Sigma$. 

	We   remind the reader that for elliptical models, VaR and ES behave very similarly. For instance, $\VaR_\alpha$ is subadditive for $\alpha \in (0,1/2)$ in this setting; see \cite[Theorem 8.28]{MFE15}. Moreover, for $\mathbf X \sim  \mathrm{E}_{n}(\boldsymbol{\mu}, \Sigma, \tau)$ and $\mathbf a\in \R^n$, both $\VaR_\alpha(\mathbf a^\top \mathbf X)$ and $\ES_\alpha(\mathbf a^\top \mathbf X)$ have the form $y \sqrt{\mathbf{a}^\top \Sigma \mathbf a}  + \mathbf a^\top \boldsymbol \mu$ for some constant $y$ being $y^{\VaR}_\alpha:=\VaR_\alpha(Y)$ or $y^\ES_\alpha:=\ES_\alpha(Y)$ where $Y\sim \mathrm{E}_1(0,1,\tau)$.
	As a consequence, the behaviour of DQ based on VaR  is similar to that based on ES, except for the case of infinite mean. 

	For  a positive semi-definite matrix $\Sigma$,
	we write $\Sigma=(\sigma_{ij})_{n\times n}$, $\sigma_i^2=\sigma_{ii}$, and $\boldsymbol \sigma=(\sigma_1,\dots,\sigma_n)$, and
	define the constant
	\begin{equation}\label{eq:k}k_\Sigma= \frac {\sum_{i=1}^n 
			\left(\mathbf{e}^\top_i \Sigma \mathbf{e}_i \right)^{1/2}} {\left( \mathbf{1}^\top \Sigma \mathbf{1}\right)^{1/2}  } 
	=\frac{\sum_{i=1}^n\sigma_{i} }{ \left(\sum_{i, j}^n \sigma_{ij}\right)^{1/2} }
	\in [1,\infty),\end{equation}
where   $\mathbf{1}=(1,\dots,1)\in\R^n$ and  $ \mathbf e_{1},\dots, \mathbf e_{n}$ are the column vectors of the $n\times n$ identity matrix $I_n$.
Moreover, $k_\Sigma = 1$ if and only if $\Sigma =\boldsymbol \sigma \boldsymbol \sigma^\top  $, which means that $\mathbf X\sim \mathrm{E}_n( \boldsymbol{\mu}, \Sigma,\tau)$ is comonotonic.

Explicit formulas  and the limiting behavior of DQs based on VaR and ES for elliptical models   are given by the following few results. 

\begin{proposition}
	\label{prop:comp_Dvar}
	Suppose that $\mathbf X \sim  \mathrm{E}_{n}(\boldsymbol{\mu}, \Sigma, \tau)$.  We have, for $\alpha \in (0,1)$,
	$${\rm DQ}_\alpha^{\VaR}(\mathbf X)
	=
		\frac{1-F  (k_\Sigma \VaR_{\alpha}(Y)  )}{\alpha}
		~~\text{and}~~
		{\rm DQ}_\alpha^{\ES}(\mathbf X)=
		\frac{1- \widetilde  F (k_\Sigma \ES_{\alpha}(Y)  )}{\alpha},		
		$$
		where $ Y \sim \mathrm{E}_1(0,1,\tau)$ with  distribution function  $F$, and  $ \widetilde F$ is the superquantile transform of  $F$ in \eqref{eq:superquantile}.
		Moreover, 
		\begin{enumerate}[(i)]   
			\item   $\alpha \mapsto {\rm DQ}_\alpha^{\VaR}(\mathbf X)$ takes value in  $[0,1]$ on $(0,1/2]$ and it takes value in   $[1,2]$ on $(1/2,1)$; \item $k_\Sigma \mapsto {\rm DQ}_\alpha^{\VaR}(\mathbf X)$ is decreasing for $\alpha \in (0,1/2]$ and increasing for $\alpha \in (1/2,1)$;
			\item $k_\Sigma \mapsto {\rm DQ}_\alpha^{\ES}(\mathbf X)$ is decreasing for $\alpha \in (0,1)$.
		\end{enumerate}
	\end{proposition}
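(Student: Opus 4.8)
The plan is to reduce everything to the one-dimensional factor $Y\sim \mathrm{E}_1(0,1,\tau)$ via the location--scale structure of elliptical laws, and then read off the range and the two monotonicity statements directly from the explicit formulas together with the signs of $\VaR_\alpha(Y)$ and $\ES_\alpha(Y)$. First I would establish the formulas. Each marginal is $X_i\sim \mathrm{E}_1(\mu_i,\sigma_i^2,\tau)$, so by the location--scale identity recalled just before the proposition, $\VaR_\alpha(X_i)=\sigma_i\VaR_\alpha(Y)+\mu_i$ and likewise for ES. The total $S=\sum_{i=1}^n X_i=\mathbf 1^\top\mathbf X$ is $\mathrm{E}_1(\mathbf 1^\top\boldsymbol\mu,\mathbf 1^\top\Sigma\mathbf 1,\tau)$, hence $S\laweq \sqrt{\mathbf 1^\top\Sigma\mathbf 1}\,Y+\mathbf 1^\top\boldsymbol\mu$ and its distribution function $F_S$ is $F$ composed with this affine map. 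Substituting $\sum_{i=1}^n\VaR_\alpha(X_i)=\VaR_\alpha(Y)\sum_{i=1}^n\sigma_i+\mathbf 1^\top\boldsymbol\mu$ into the first identity of \eqref{eq:superquantile}, the location term $\mathbf 1^\top\boldsymbol\mu$ cancels and the scale $\sqrt{\mathbf 1^\top\Sigma\mathbf 1}$ combines with $\sum_{i=1}^n\sigma_i$ to produce exactly $k_\Sigma$ as in \eqref{eq:k}, giving the VaR formula. For ES I would use that the superquantile transform has quantile function $p\mapsto \ES_{1-p}(\cdot)$ and that ES is location--scale covariant; this forces the superquantile transform $\widetilde{F_S}$ of $F_S$ to be the same affine transform of $\widetilde F$, after which the identical cancellation yields the ES formula.

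For (i), I would use that $Y$ is symmetric about $0$, so $\VaR_\alpha(Y)\ge 0$ for $\alpha\in(0,1/2]$ and $\VaR_\alpha(Y)<0$ for $\alpha\in(1/2,1)$, while $F(\VaR_\alpha(Y))=1-\alpha$. When $\alpha\le 1/2$, $k_\Sigma\ge 1$ gives $k_\Sigma\VaR_\alpha(Y)\ge \VaR_\alpha(Y)$, hence $F(k_\Sigma\VaR_\alpha(Y))\ge 1-\alpha$ and $\mathrm{DQ}^{\VaR}_\alpha(\mathbf X)\in[0,1]$. When $\alpha>1/2$ the inequality flips to $k_\Sigma\VaR_\alpha(Y)\le \VaR_\alpha(Y)$, giving $\mathrm{DQ}^{\VaR}_\alpha(\mathbf X)\ge 1$, and the upper bound follows at once from $F\ge 0$, which yields $\mathrm{DQ}^{\VaR}_\alpha(\mathbf X)\le 1/\alpha<2$.

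For (ii), the sign of $\VaR_\alpha(Y)$ controls whether $k_\Sigma\mapsto k_\Sigma\VaR_\alpha(Y)$ is increasing or decreasing, and monotonicity of the distribution function $F$ then transfers to $\mathrm{DQ}^{\VaR}_\alpha$: it is decreasing in $k_\Sigma$ when $\alpha\le 1/2$ and increasing when $\alpha>1/2$. No density is needed, only that $F$ is nondecreasing.

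For (iii), the crux is the observation that $\ES_\alpha(Y)>0$ for every $\alpha\in(0,1)$: writing $\ES_\alpha(Y)=\frac1\alpha\int_{1-\alpha}^1 F^{-1}(u)\,\d u$ as the average of the upper $\alpha$-part of an increasing quantile function whose full average is $\E[Y]=0$, it dominates the mean, and the domination is strict since $Y$ is nondegenerate (as $\Sigma\neq 0$). Hence $k_\Sigma\mapsto k_\Sigma\ES_\alpha(Y)$ is increasing, and monotonicity of $\widetilde F$ makes $\mathrm{DQ}^{\ES}_\alpha$ decreasing in $k_\Sigma$ for all $\alpha\in(0,1)$, explaining why no $1/2$ threshold appears for ES. I expect the only delicate points to be the correct transformation of the superquantile transform under affine maps in the ES formula and the sign argument for $\ES_\alpha(Y)$ as $\alpha\to 1$; the degenerate case $\mathbf 1^\top\Sigma\mathbf 1=0$ (constant $S$) is excluded because $k_\Sigma$ presupposes a positive denominator and is in any event covered by Theorem \ref{th:var-01n}(ii).
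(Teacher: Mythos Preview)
Your proposal is correct and follows essentially the same route as the paper: reduce to the one-dimensional factor $Y$ via the affine closure of elliptical laws, plug into \eqref{eq:superquantile}, and read off (i)--(iii) from the sign of $\VaR_\alpha(Y)$ and $\ES_\alpha(Y)$ together with monotonicity of $F$ and $\widetilde F$. Your treatment is in fact slightly more explicit than the paper's in two places --- the affine behaviour of the superquantile transform and the justification that $\ES_\alpha(Y)>0$ via the upper-average argument --- and you correctly flag the degenerate case $\mathbf 1^\top\Sigma\mathbf 1=0$, which the paper handles only implicitly through the standing assumption that $k_\Sigma\in[1,\infty)$.
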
 
	\begin{proof}
		We first consider the case of VaR. Since  $\mathbf{X} \sim \mathrm{E}_{n}(\boldsymbol{\mu}, \Sigma, \tau)$,  the linear  structure of ellipitical distributions gives    $\sum_{i=1}^n X_i \sim  \mathrm{E}_{1}( \mathbf I^\top \boldsymbol{\mu}, \mathbf I^\top\Sigma \mathbf I, \tau)$.  That is, $\sum_{i=1}^n X_i \laweq \sum_{i=1}^n \mu_i+\Vert \mathbf{1}^\top A\Vert_2 Y$, where $A$ is the Cholesky decomposition of $\Sigma$.
		Also,  we have $ \VaR_\alpha(X_i)=\mu_i+\Vert  \mathbf{e}^\top_iA\Vert_2 \VaR_\alpha (Y).$
		By \eqref{eq:superquantile},
		$$
		\begin{aligned}\mathrm{DQ}^\VaR_\alpha(\mathbf X)&=\frac{1}{\alpha}\p\left(\sum_{i=1}^n X_i>\sum_i^n \mu_i+\Vert  \mathbf{e}^\top_i A\Vert_2 \VaR_\alpha (Y)\right)\\
			&=\frac{1}{\alpha}\p\left(\sum_{i=1}^n \mu_i+\Vert \mathbf{1}^\top A\Vert_2 Y>\sum_i^n \mu_i+\Vert  \mathbf{e}^\top_i A\Vert_2 \VaR_\alpha (Y)\right)=\frac{1-F(k_\Sigma \VaR_{\alpha} (Y))}{\alpha}.
		\end{aligned}$$
		By replacing $\VaR$ with $\ES$ and $\sum_{i=1}^n X_i$ with $\ES_U(\sum_{i=1}^n X_i)$, we can get the first formula of $\mathrm{DQ}^\ES_\alpha(\mathbf X)$. 
		\begin{enumerate}[(i)]
			\item  For  $\alpha\in(0,1/2]$, we have $\VaR_{\alpha} \left(Y\right) \le k_{\Sigma}\VaR_{\alpha}(Y)$ and $1-\alpha\le F(k_\Sigma \VaR_\alpha(Y)) \le 1$. Hence, $0\le \mathrm{DQ}^\VaR_\alpha(\mathbf X)\le 1$.
			
			For  $\alpha\in(1/2,1)$, $\VaR_{\alpha} \left(Y\right) \ge k_{\Sigma}\VaR_{\alpha}(Y)$ and $\alpha\le 1- F(k_\Sigma \VaR_\alpha(Y))\le 1$. Hence, $1\le\mathrm{DQ}^\VaR_\alpha(\mathbf X)\le 1/\alpha\le 2$.
			\item  If $\alpha\in(0,1/2]$,  then $\VaR_\alpha (Y)\geq0$, and thus $\mathrm{DQ}^\VaR_\alpha(\mathbf X)$ decreases in $k_{\Sigma}$. If $\alpha\in(1/2,1)$,  then $\VaR_\alpha (Y)\leq0$, and thus $\mathrm{DQ}^\VaR_\alpha(\mathbf X)$ increases in $k_{\Sigma}$.
			\item For $\alpha \in (0,1)$, $\ES_\alpha (Y)\ge 0$. Hence, $\mathrm{DQ}^\ES_\alpha(\mathbf X)$ increases in $k_{\Sigma}$.
			\qedhere
		\end{enumerate}  
	\end{proof}

	In the discussions below, we will assume $\alpha \in (0,1/2)$, which is the most common setting in risk management.
	In Proposition \ref{prop:comp_Dvar},  we see that, for $\alpha \in (0,1/2)$, 
	${\rm DQ}_\alpha^{\VaR}(\mathbf X)\in [0,1]$.
	This is in contrast to Theorem \ref{th:var-01n}, where the range of ${\rm DQ}_\alpha^{\VaR}$ is $[0,n]$ instead of $[0,1]$, when we do not restrict to elliptical models. This phenomenon should not be surprising, because, as we mentioned before, $\VaR_\alpha$ for $\alpha \in (0,1/2)$ is similar to $\ES_\alpha$ for elliptical models,
	and  ${\rm DQ}_\alpha^{\ES}$ has range $[0,1]$.

	In case  $ Y \sim \mathrm{E}_1(0,1,\tau)$ has a positive density on $\R$, we can see from Proposition \ref{prop:comp_Dvar} that $   {\rm DQ}_\alpha^{\VaR}(\mathbf X)=1$ if and only if $k_\Sigma=1$ (i.e., $\mathbf X$ is comonotonic) or $\VaR_\alpha(Y)=0$ (i.e., $\alpha=1/2$).
	Similarly,  
	$   {\rm DQ}_\alpha^{\ES}(\mathbf X)=1$ if and only if $k_\Sigma=1$.

	In case the   elliptical distribution is asymptotically uncorrelated,
	we will see   that ${\rm DQ}_\alpha^{\VaR}(\mathbf X) \to 0$ and ${\rm DQ}_\alpha^{\ES}(\mathbf X) \to 0$ as $n\to \infty$.  This is consistent with our intuition that, if the individual risks are asymptotically uncorrelated, then full diversification can be achieved asymptotically, thus the diversification index goes to $0$.
	%
	The value  $\mathrm{AC}_{\Sigma}= \sum_{i, j}^n \sigma_{ij}/(\sum_{i=1}^n\sigma_i)^2 = 1/k_\Sigma^2$  will be called the average correlation (AC) of $\Sigma$.
	
	%
	%
	%

	\begin{proposition}\label{cor:VaR} 
		Suppose that $\mathbf X \sim  \mathrm{E}_{n}(\boldsymbol{\mu}, \Sigma, \tau)$.
		\begin{enumerate}[(i)]
			\item 
			Let  $Y \sim \mathrm{E}_1(0,1,\tau)$ and $f$ be the  density function of $Y$. We have
			\begin{equation}
				\label{eq:limit} \lim_{\alpha \downarrow 0} {\rm DQ}^{\VaR}_{\alpha}(\mathbf X)=
				\lim_{x \to \infty}k_\Sigma \frac{f(k_\Sigma x)}{f(x)} \mbox{~~~if $ \VaR_0(Y)=\infty  $ and the limit exists},
			\end{equation}
			and $\lim_{\alpha \downarrow 0} {\rm DQ}^{\VaR}_{\alpha}(\mathbf X)=0$ if $ \VaR_0(Y)<\infty$.
			\item 
			If $\lim_{n\rightarrow\infty}\mathrm{AC}_{\Sigma}= 0$,  then  $$\lim_{n \to \infty}{\rm DQ}_\alpha^{\VaR}(\mathbf X)  = \lim_{n \to \infty}{\rm DQ}_\beta ^{\ES}(\mathbf X) =0$$  for $\alpha\in (0,1/2)$ and $\beta \in (0,1)$.
		\end{enumerate}
	\end{proposition}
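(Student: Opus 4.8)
The plan is to build everything on the explicit formulas from Proposition~\ref{prop:comp_Dvar}, namely ${\rm DQ}_\alpha^{\VaR}(\mathbf X)=(1-F(k_\Sigma\VaR_\alpha(Y)))/\alpha$ and ${\rm DQ}_\alpha^{\ES}(\mathbf X)=(1-\widetilde F(k_\Sigma\ES_\alpha(Y)))/\alpha$, where $Y\sim\mathrm{E}_1(0,1,\tau)$ has continuous distribution $F$ with density $f$. Since $F$ is continuous, $\alpha=1-F(\VaR_\alpha(Y))$, so I would first rewrite the VaR quotient as the survival-function ratio
$${\rm DQ}_\alpha^{\VaR}(\mathbf X)=\frac{1-F(k_\Sigma\VaR_\alpha(Y))}{1-F(\VaR_\alpha(Y))}.$$
Setting $x=\VaR_\alpha(Y)$, we have $x\uparrow\VaR_0(Y)=\esssup(Y)$ as $\alpha\downarrow 0$, which reduces part (i) to analyzing the limit of $(1-F(k_\Sigma x))/(1-F(x))$ as $x\to\VaR_0(Y)$.

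For part (i) I would split on whether $\VaR_0(Y)$ is finite. If $\VaR_0(Y)<\infty$, then, using $k_\Sigma>1$ together with $\VaR_0(Y)>0$ (the latter from the symmetry of $Y$ about $0$), the argument $k_\Sigma x$ eventually exceeds $\esssup(Y)$, so $F(k_\Sigma x)=1$ and the quotient equals $0$ for all sufficiently small $\alpha$. If $\VaR_0(Y)=\infty$, then $x\to\infty$ and both numerator and denominator tend to $0$; I would apply L'Hôpital's rule, differentiating with the density $f$, to obtain $\lim_{x\to\infty}k_\Sigma f(k_\Sigma x)/f(x)$. This is legitimate precisely under the stated hypothesis that this limit exists.

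Part (ii) is then a direct consequence of the same two formulas once one observes that $\mathrm{AC}_\Sigma=1/k_\Sigma^2$, so the condition $\mathrm{AC}_\Sigma\to 0$ is equivalent to $k_\Sigma\to\infty$. I would use that for $\alpha\in(0,1/2)$ the symmetry of $Y$ gives $\VaR_\alpha(Y)>0$, while for $\beta\in(0,1)$ the strict monotonicity of $\beta\mapsto\ES_\beta(Y)$ for the continuous, fully supported elliptical law gives $\ES_\beta(Y)>\E[Y]=0$. Hence $k_\Sigma\VaR_\alpha(Y)\to\infty$ and $k_\Sigma\ES_\beta(Y)\to\infty$, so $F$ and $\widetilde F$ evaluated at these arguments tend to $1$ (both $F$ and $\widetilde F$ being proper distribution functions, $\widetilde F$ defined since $Y\in L^1$), and both quotients tend to $0$.

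The main obstacle is the L'Hôpital step in part (i): I must confirm its hypotheses, in particular that the denominator's derivative $-f(x)$ is nonzero for all large $x$. This holds because an elliptical law with $\VaR_0(Y)=\infty$ has a density positive on a neighborhood of $+\infty$; with that in hand, the \emph{limit exists} assumption supplies exactly the convergence of the derivative ratio that L'Hôpital requires, and the conclusion follows. The remaining ingredients---continuity of $F$, the positivity of $\VaR_\alpha(Y)$ and $\ES_\beta(Y)$, and $\widetilde F(\infty)=1$---are routine.
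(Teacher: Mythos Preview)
Your proposal is correct and follows essentially the same route as the paper. Both arguments start from the explicit formula ${\rm DQ}_\alpha^{\VaR}(\mathbf X)=(1-F(k_\Sigma\VaR_\alpha(Y)))/\alpha$ of Proposition~\ref{prop:comp_Dvar}, handle the bounded case by noting that $k_\Sigma\VaR_\alpha(Y)$ eventually lies beyond the support of $Y$, and in the unbounded case obtain the density ratio via L'H\^opital; for part~(ii) both use $\mathrm{AC}_\Sigma\to 0\iff k_\Sigma\to\infty$ together with positivity of $\VaR_\alpha(Y)$ and $\ES_\beta(Y)$ to push the arguments of $F$ and $\widetilde F$ to $+\infty$. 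Your write-up is in fact a bit more careful than the paper's (you explicitly verify the L'H\^opital hypotheses and the positivity needed in~(ii)), but the underlying ideas are identical.
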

	
	\begin{proof}
		(i)
		If $\VaR_0(Y)<\infty$, then $\VaR_0(Y)\le k_\Sigma\VaR_0(Y)$ as $k_{\Sigma}\ge 1$. Hence,  ${\rm DQ}_0^{\VaR}(\mathbf X)=0$.
		If $\VaR_0(Y)=\infty$, then  $\VaR_0(Y)>k_\Sigma\VaR_\alpha(Y)$ for $\alpha>0$. Therefore,
		$$
		\begin{aligned}
			\lim_{\alpha \to 0} {\rm DQ}^{\VaR}_{\alpha}(\mathbf X)&=\lim_{\alpha \to 0}  \frac{1-F\left(k_\Sigma \VaR_{\alpha}(Y))\right)}{\alpha} =\lim_{\alpha \to 0}  k_\Sigma\frac{f\left(k_\Sigma \VaR_{\alpha}(Y))\right)}{f( \VaR_{\alpha}(Y)))}  =\lim_{x \to \infty} k_\Sigma \frac{f\left(k_\Sigma x\right)}{f( x)},
		\end{aligned}
		$$
		and we get the desired result.

		(ii) 
		We only show the proof of   ${\rm DQ}_\alpha^{\VaR}$ as the result for ${\rm DQ}_\beta^{\ES}$ can be obtained along the same analogy.  By Proposition \ref{prop:comp_Dvar}, it is clear that  $\mathrm{AC}_\Sigma \to \mathrm{DQ}_{\alpha}^\VaR(\mathbf X)$ is increasing for $\alpha \in (0,1/2)$ and $\mathrm{AC}_\Sigma \to \mathrm{DQ}_{\beta}^\ES(\mathbf X)$ is increasing for $\alpha \in (0,1)$.
		Moreover, if  $\mathrm{AC}_{\Sigma}$ goes to $0$ as $n\to\infty$,  we have
		$\lim_{n\rightarrow\infty} k_\Sigma=\infty
		$. Thus, we have ${\rm DQ}_\alpha^{\VaR}(\mathbf X) \to 0$ as $n\to \infty$ by Proposition \ref{prop:comp_Dvar}.
	\end{proof}
	
	Explicit formulas of \eqref{eq:limit} for normal and t-distributions  are  provided in Section \ref{sec:normal-t}.
	
	\begin{remark}\label{rem:ES-t}
		In general, we do not have a limiting result for $\mathrm {DQ}^{\ES}_\alpha$ in the form of Proposition \ref{cor:VaR} (i).
		If $\mathbf X\sim \mathrm{t}(\nu, \boldsymbol \mu,\Sigma)$ for $\nu>1$, then $\mathrm {DQ}^{\ES}_\alpha$   has the same limit as $\mathrm {DQ}^{\VaR}_\alpha$  in \eqref{eq:limit} as $\alpha\downarrow 0$ because $\VaR_\alpha(Y)/\ES_\alpha(Y)$ has a constant limit $(\nu-1)/\nu$ for a t-distributed $Y$ by the Karamata theorem; see Theorem A.7 of \cite{MFE15}.
	\end{remark}
	
	From the results above,  
	${\rm DQ}^{\VaR}_{\alpha}(\mathbf X)$  and ${\rm DQ}^{\ES}_{\alpha}(\mathbf X)$ depend on both $\tau$ and   $\alpha$.
	In sharp contrast, DR of a centered  elliptical distribution is always $1/k_{\Sigma}$, which ignores the shape of the distribution.
	More precisely, for $\mathbf X \sim  \mathrm{E}_{n}(\mathbf{0}, \Sigma, \tau)$ and $\alpha\in(0,1/2)$, we have
	\begin{equation}
		\label{eq:DRellip}
		{\rm DR}^{\VaR_\alpha}(\mathbf X)=\frac{\VaR_{\alpha}(\sum_{i=1}^n X_i)}{\sum_{i=1}^n \VaR_{\alpha}(X_i)}=\frac{ \left(\sum_{i, j}^n \sigma_{ij}\right)^{1/2}   \VaR_\alpha (Y) 
		}{\sum_{i=1}^n \sigma_i \VaR_\alpha (Y)} = \frac{1}{k_\Sigma},
	\end{equation}
	and similarly, $ {\rm DR}^{\ES_\alpha}(\mathbf X)= {1}/{k_\Sigma}$.
	Note that in this case, ${\rm DR}^{\VaR_\alpha}$ and ${\rm DR}^{\ES_\alpha}$  do not depend on $\tau$, $\alpha$ or whether the risk measure is VaR or ES. Indeed, DR based on variance or SD also has the same value $1/k_\Sigma$.
	
	For $\mathbf X \sim  \mathrm{E}_{n}(\boldsymbol{\mu}, \Sigma, \tau)$ with $\boldsymbol \mu \ne \mathbf 0$,
	${\rm DR}^{\VaR_\alpha}(\mathbf X)$ and ${\rm DR}^{\ES_\alpha}(\mathbf X)$ depend also on $\boldsymbol \mu$, which is arguably undesirable as it conflicts location invariance.
	Nevertheless,  $\lim_{\alpha \downarrow 0} {\rm DR}^{\VaR_\alpha}(\mathbf X) =1/k_\Sigma $  if $\VaR_{0} (Y) =\infty$ (i.e., the value taken by $Y$ is unbounded from above), and this limit does not depend on $\boldsymbol \mu$.
	On the other hand,  ${\rm DQ}^{\VaR}_{\alpha}(\mathbf X)$ has a limit in \eqref{eq:limit} which depends on both $k_\Sigma$ and $\tau$.
	The above observations suggest that DQ is more comprehensive than DR by utilizing the information on the shape of the distribution.

	A similar result to Proposition \ref{cor:VaR} (ii) holds for DR of centered elliptical distributions. More precisely,
	If  $\alpha\in (0,1/2)$, $\boldsymbol \mu=\mathbf 0$,  and $\lim_{n\rightarrow\infty}\mathrm{AC}_{\Sigma}= 0$, then  we have $\lim_{n \to \infty}{\rm DR}^{\VaR_\alpha}(\mathbf X) =0$ by \eqref{eq:DRellip}, and similarly, $\lim_{n \to \infty}{\rm DR}^{\ES_\alpha}(\mathbf X) =0$.
	These limits do not hold if $\boldsymbol \mu\ne \mathbf  0$.

	\subsection{Normal and t-distributions}
	\label{sec:normal-t}
	
	Next, we  take a close look at the two most important elliptical distributions used in finance and insurance, namely  the multivariate normal distribution
	and the multivariate t-distribution.
	The explicit formulas for DQ for these distributions are  available through the explicit formulas of VaR and ES; see Examples 2.14 and 2.15 of \cite{MFE15}.

	\cite{HLW22}  proposed   three   simple models  where the components  of portfolio vectors follow the  iid normal model, iid t-model and the common shock t-model, respectively, and showed that the diversification is the strongest according to DQ for the iid normal model and   the iid t-model has a smaller DQ than the common shock t-model. In contrast, DR reports a similar value for all three models; see their Section 5.2 for details.  Therefore, DQ  has the nice feature that it can  capture heavy tails and common shocks.

	We present some formulas and numerical results for correlated normal and t-models. 
	We focus our discussions mainly on $\mathrm{DQ}^{\VaR}_\alpha$
	as the case of $\mathrm{DQ}^{\ES}_\alpha$ is similar.
	We first compute the limit of DQ as $\alpha \downarrow 0$ according to  \eqref{eq:limit}.  By direct calculation,  
	\begin{equation}
		\label{eq:limit-n}
		\lim_{\alpha \downarrow 0} {\rm DQ}^{\VaR}_{\alpha}(\mathbf X) =\id_{\{k_\Sigma=1\}} 
		\mbox{~~~if $\mathbf X\sim \mathrm{N}(\boldsymbol \mu,\Sigma)$;}
	\end{equation} 
	\begin{equation}
		\label{eq:limit-t}
		\lim_{\alpha \downarrow 0} {\rm DQ}^{\VaR}_{\alpha}(\mathbf X)= k_\Sigma^{-\nu}   \mbox{~~~if $\mathbf X\sim\mathrm{t}(\nu, \boldsymbol \mu,\Sigma)$.}
	\end{equation}
	The above two values properly reflect the fact that the normal distribution is tail independent unless $k_\Sigma=1$ (i.e., comonotonic), whereas the t-distribution is tail dependent; see Examples 7.38 and 7.39 of \cite{MFE15}.
	DQ is able to capture this phenomenon well, by providing,  for $\alpha$ close to $0$, $ {\rm DQ}^{\VaR}_{\alpha} \approx 0$ (strong diversification) for normal distribution and $ {\rm DQ}^{\VaR}_{\alpha} \approx k_\Sigma^{-\nu}$ (moderate diversification for common choices of $\Sigma$ and $\nu$; see Figure \ref{fig:ellip}) for a t-distribution.
	On the other hand, DR of centered normal and t-distributions is always $1/k_{\Sigma}$, which fails to distinguish
	the   tail of the t-distribution from that of the normal distribution (see \eqref{eq:DRellip}).

For numerical illustrations, we consider two specific dispersion matrices, parameterized by $r\in [0,1]$ and $n\in \N$,
	$$
	\Sigma_1=(\sigma_{ij})_{n\times n},~~~\mbox{ where $\sigma_{ii}=1$ and $\sigma_{ij}=r$ for $i\ne j$, and}
	$$
	$$
	\Sigma_2=(\sigma_{ij})_{n\times n},~~~ \mbox{ where  $\sigma_{ii}=1$ and $\sigma_{ij}=r^{|j-i|}$ for $i\ne j$}.
	$$

Note that $\Sigma_1$ represents an equicorrelated model
	and $\Sigma_2$ represents an autoregressive model AR(1).   For $r=0$, $r=1$ or $n=2$, these two dispersion matrices are identical.
	We take four models $\mathbf{X}_i\sim \mathrm N(\boldsymbol{\mu},\Sigma_i)$
	and $\mathbf{Y}_i\sim \mathrm  t(\nu,\boldsymbol{\mu},\Sigma_i)$,  $i=1,2$, and we will let $r,\nu,\alpha,n$ vary.  Note that the location $\boldsymbol{\mu}$ does not matter in computing DQ, and we can simply take $\boldsymbol \mu=\mathbf 0$. The default parameters are set as $r=0.3$, $n=4$, $\nu=3$ and $\alpha =0.05$ if not explained otherwise.

		%


	\subsubsection*{DQ for the t-models  as  the parameter of degrees of freedom $\nu$ varies}
	\begin{figure}[t]
		\caption{DQ and DR based on VaR for $\nu\in (0,10]$  and ES for $\nu\in (1,10]$ with fixed $\alpha =0.05$, $r=0.3$ and $n=4$} 
		\centering\includegraphics[width=15cm]{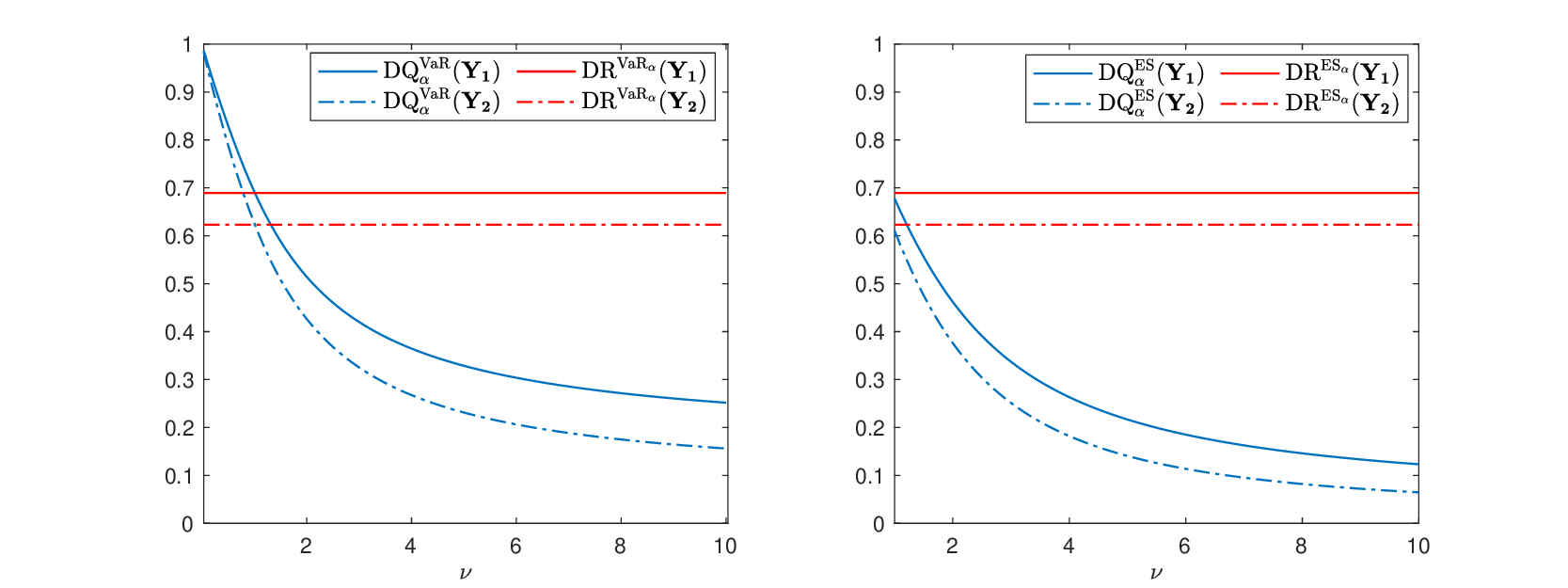}\label{fig:nu_ellip}\end{figure}
	Figure \ref{fig:nu_ellip} presents  the values of DQ for the t-models  with   varying $\nu$, where $\nu\in (0,10]$  for  VaR and $\nu\in (1,10]$ for  ES.
	We observe a monotonic relation that $\mathrm{DQ}^\VaR_\alpha$ and $\mathrm{DQ}^\ES_\alpha$ are decreasing in $\nu$.  In particular, if $\nu$ is close to $0$, we see that $\mathrm{DQ}^\VaR_\alpha \approx 1$ which means there is almost no diversification effect for such super heavy-tailed models. On the other hand, DR completely ignores $\nu$ and always reports the same value. Note that the values of DQ and DR are not directly comparable as they are not on the same scale. 

	\subsubsection*{DQ for elliptical models as  the correlation parameter $r$ varies}
	\begin{figure}[t]
		\caption{DQ based on VaR and ES for $r\in [0,1]$ with fixed $\alpha =0.05$, $\nu=3$,   and $n=4$} \label{fig:r_ellip}\centering\includegraphics[width=15cm]{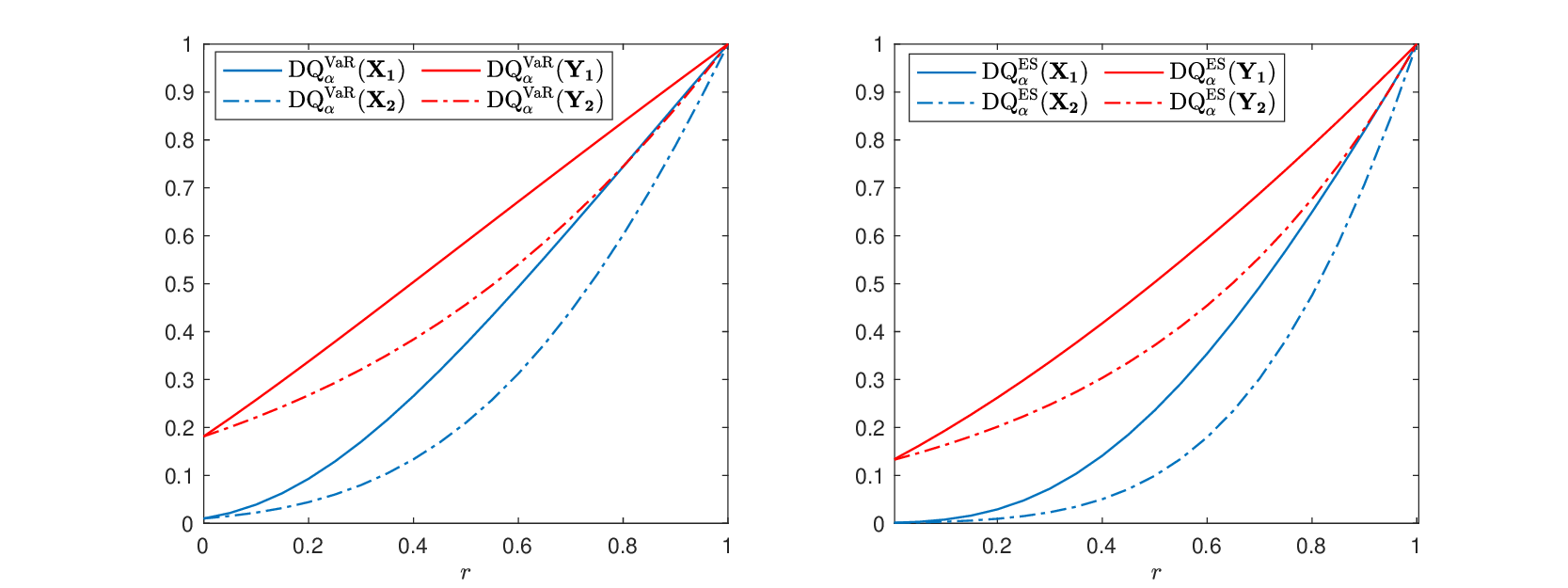}\end{figure}
	In Figure \ref{fig:r_ellip}, we report how DQ changes over $r\in [0,1]$ in the four models. Intuitively, for $r$ close to $1$ which corresponds to comonotonicity,   DQ is close to $1$ in all models since there is no or very weak diversification in this case.
	More interestingly, for $r$ close to $0$, there is very strong diversification for the normal models, meaning $\mathrm{DQ}^\VaR_\alpha \approx 0$
	and
	$\mathrm{DQ}^\ES_\alpha \approx 0$,
	whereas for the t-models,
	$\mathrm{DQ}^\VaR_\alpha $ and $\mathrm{DQ}^\ES_\alpha$ are clearly away from $0$. 
	Note that the components of a t-distribution are tail dependent even for zero or negative correlation (see Example 7.39 of \cite{MFE15}). Hence, DQ is able to capture dependence created by the common factor in the t-model, in addition to its correlation structure. 
	
	\subsubsection*{DQ for varying $\alpha$ and its limit}
	
	\begin{figure}[t]
		\caption{DQ based on VaR and ES for $\alpha \in (0,0.1)$ with fixed $\nu=3$, $r=0.3$ and $n=4$} \label{fig:ellip}\centering\includegraphics[width=15cm]{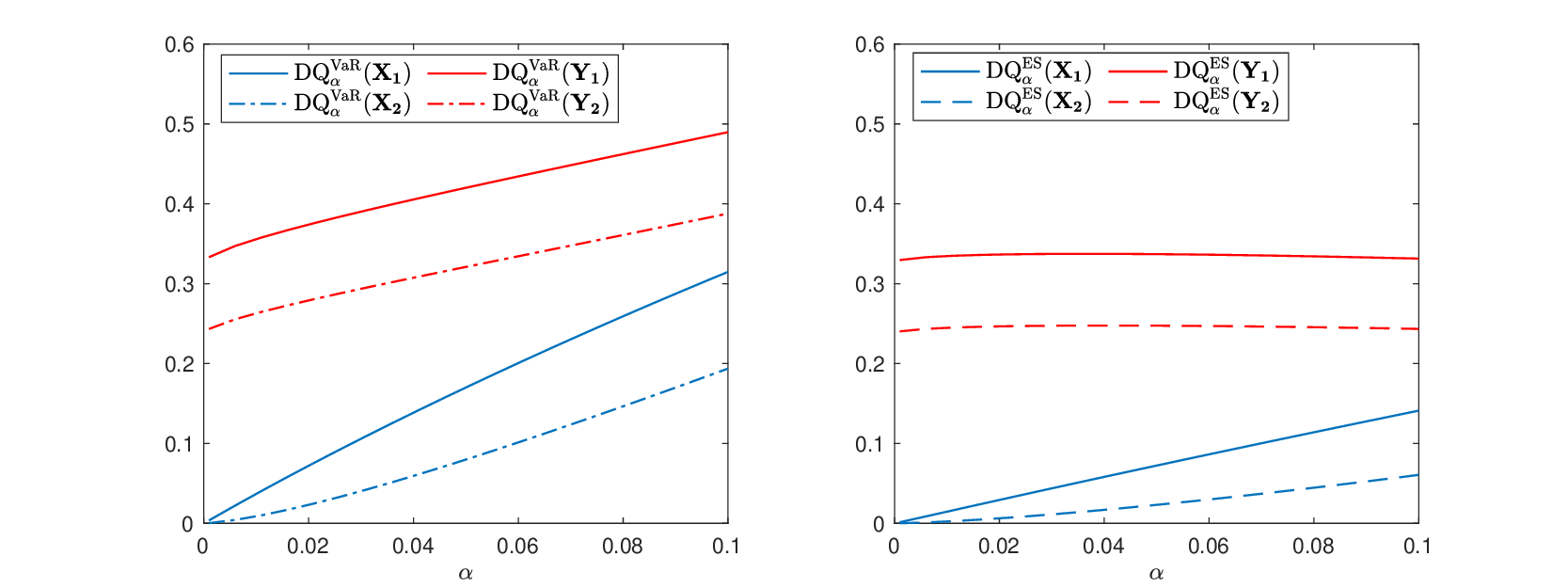}\end{figure}
	
	In Figure \ref{fig:ellip}, we report $\mathrm{DQ}^\VaR_\alpha$ and  $\mathrm{DQ}^\ES_\alpha$ for $\alpha \in(0,1)$ in the four models  with correlation matrices specified  in  Section \ref{sec:normal-t}.
	We can see from Figure \ref{fig:ellip} that DQ  can be non-monotonic with  respect to $\alpha$ (see the curves of ${\rm DQ}^{\ES}_{\alpha}$ for  $\mathbf{X}_i\sim \mathrm t(\nu,\boldsymbol{\mu},\Sigma_i)$). In addition,  we can compute $k_{\Sigma_1}=1.4510$ and $k_{\Sigma_2}=1.6046$. Hence, it can be anticipated from Proposition \ref{prop:comp_Dvar} that, since DQ is decreasing in $k_{\Sigma}$,
	models with $\Sigma_1$ has larger DQ than the corresponding models with  $\Sigma_2$.
	Moreover, as $\alpha\downarrow 0$, we can see that $\mathrm{DQ}^\VaR_\alpha$ converges to its corresponding limits in \eqref{eq:limit-n} and \eqref{eq:limit-t}; also note that  $\mathrm{DQ}^\ES_\alpha$ has the same limits as $\mathrm{DQ}^\VaR_\alpha$ for t-distributions as discussed in Remark \ref{rem:ES-t}.

	\subsubsection*{DQ for elliptical models as  the dimension $n$ varies}
	Figure \ref{fig:n_ellip} is related to Section \ref{sec:normal-t} and  reports how DQ changes over $n\in [2,100]$ in the four models. We choose $r=0.5$ in this experiment for better visibility. As we can see, DQ decreases to $0$ for models with the AR(1) dispersion  $\Sigma_2$,
	and DQ converges to a non-zero constant for models with  the equicorrelated dispersion  $\Sigma_1$.
	This is consistent with Proposition \ref{cor:VaR} (ii) because
	$\mathrm{AC}_{\Sigma_1}\to r$  and $\mathrm{AC}_{\Sigma_2}\to 0$ as $n\to\infty$.

	\begin{figure}[htb!]
		\caption{DQs based on VaR and ES for $n\in [2,100]$ with fixed $\alpha =0.05$, $r=0.5$ and $\nu=3$} \label{fig:n_ellip}\centering\includegraphics[width=15cm]{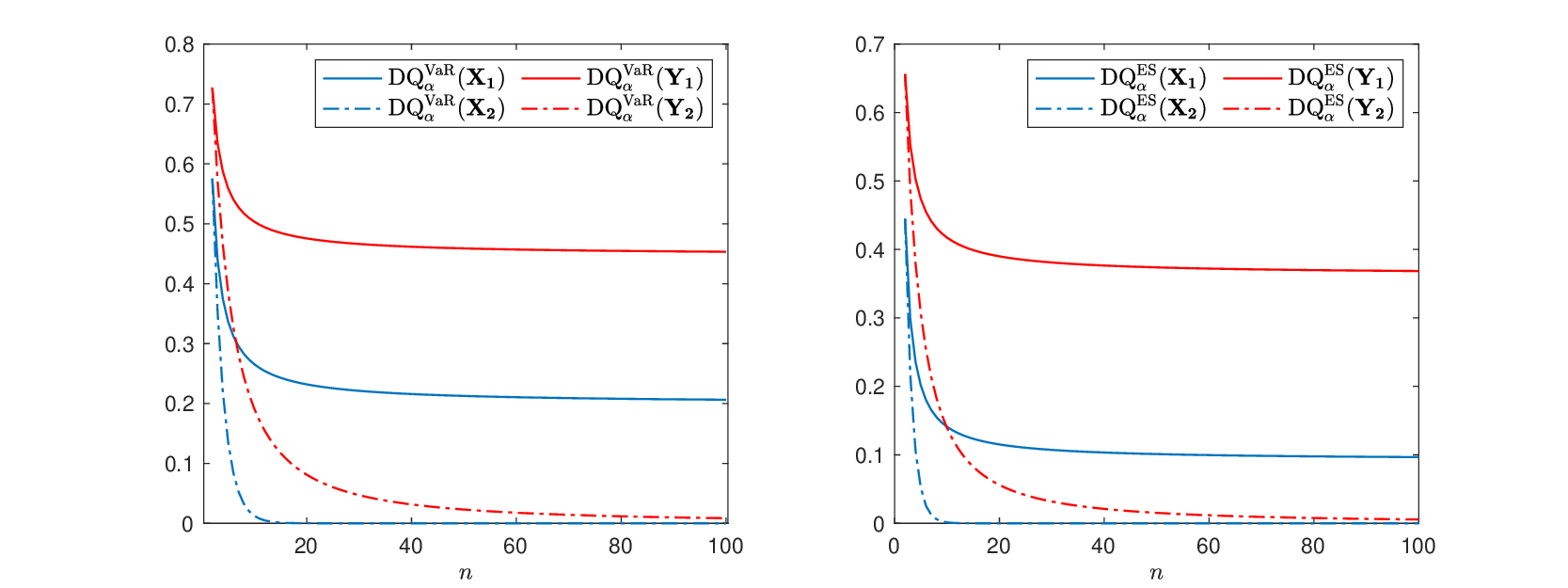}\end{figure}

	\subsubsection*{Cross-comparison between DQ based on VaR and ES}
	
	One may be tempted to compare values of DQ based on VaR to those based on ES.
	Although we see from Figure \ref{fig:ellip} that the curve $\mathrm{DQ}^{\VaR}_\alpha $ often dominates the curve $\mathrm{DQ}^{\ES}_\alpha $ for the same model, such a comparison is not meaningful, since VaR and ES are not meant to be compared at the same  level $\alpha$. For a fair comparison, one needs to associate a VaR level $\alpha$ to an ES level $c\alpha$ where $c\ge 1$ is PELVE of \cite{LW22} defined via  $\ES_{c \alpha}(X) = \VaR_{\alpha}(X)$ for $X$ being normally or t-distributed; note that the location and scale of $X$ do not matter. 
	The values of $c$, $\mathrm{DQ}^{\VaR}_\alpha $  and $\mathrm{DQ}^{\ES}_{c\alpha} $  for $\alpha=0.01$ are summarized in Table \ref{tab:PELVE}.
	As we observe from Table \ref{tab:PELVE}, the values of DQs based on VaR and ES are quite close when the probability level is calibrated via PELVE.
	This is consistent with the afore-mentioned fact that VaR behaves similarly to ES in the setting of elliptical models.
	\begin{table}[H]
		\def\arraystretch{1}
		\begin{center}
			\caption{Values of DQs based on VaR  at level $\alpha=0.01$ and ES    at level $c\alpha$, where $n=4$ and $r=0.3$}
			\label{tab:PELVE}
			\begin{tabular}{c|ccccc}
				&$ c $& $c\alpha$& $\mathrm{DQ}^{\VaR}_\alpha $ &   $\mathrm{DQ}^{\ES}_{c\alpha} $ \\ \hline
				$\mathbf{X}_1\sim \mathrm N(\boldsymbol{\mu},\Sigma_1)$   &2.58    &0.0258& 0.0369& 0.0377 \\ \hline
				$\mathbf{X}_2\sim \mathrm N(\boldsymbol{\mu},\Sigma_2)$   &2.58  & 0.0258& 0.0024&  0.0025 \\ \hline
				$\mathbf{Y}_1\sim \mathrm  t(3,\boldsymbol{\mu},\Sigma_1)$& 3.31 & 0.0331  & 0.3558 &0.3373  \\ \hline
				$\mathbf{Y}_2\sim \mathrm  t(3,\boldsymbol{\mu},\Sigma_2)$ & 3.31 & 0.0331  & 0.2094  & 0.1961
				\\ \hline \hline
			\end{tabular}
		\end{center}
	\end{table}

	\section{Multivariate regularly varying  models}
	\label{sec:MRV}
	 Heavy-tailed  distributions are known to exhibit complicated and even controversial phenomena in finance (see e.g., \cite{IJW11}), 
	 and they are typically modelled via 
	  multivariate regularly varying  (MRV) models, important objects in 
	Extreme Value Theory.
	Such models are particularly relevant for  tail risk measures such as VaR and ES at high levels (\cite{MFE15}). 
	In particular, MRV models  have been applied to DR based on VaR (e.g., \cite{MR10} and \cite{ME13}). 
{Since $\VaR_\alpha(X)/\ES_\alpha (X)\to (\gamma-1)/\gamma$ as $\alpha \downarrow 0$ for $X\in \mathrm{RV}_\gamma$ with finite mean (see e.g., \citet[p.154]{MFE15}), we only present the case of VaR. }

	\begin{definition}
		A random vector $\mathbf X\in\X^n$ has an  MRV model  with some $\gamma >0$ if there exists a Borel probability measure $\Psi$ on the unit sphere $\mathbb {S}^{n}:=\left\{\mathbf{s} \in \mathbb{R}^{n}:\|\mathbf{s}\|=1\right\}$
		such that for any $t>0$ and any Borel set $S \subseteq \mathbb{S}^{n}$ with $\Psi(\partial S)=0$,
		$$
		\lim _{x \rightarrow \infty} \frac{\p (\|\mathbf{X}\|>t x, ~\mathbf{X}/\|\mathbf{X}\| \in S)}{\p (\|\mathbf{X}\|>x)}=t^{-\gamma} \Psi(  S), $$	where $\|\cdot\|$  is the $L_1$-norm (one could use any other norm equivalent to the $L_1$-norm). We call  $\gamma$ the tail index of  $\mathbf X$ and $\Psi$  the spectral measure of $\mathbf X$.  This is written as $\mathbf X \in \mathrm{MRV}_{\gamma}(\Psi)$.
	\end{definition}
       The univariate regular variation   with   tail index $\gamma$ is defined as 
 	$$
 	\mbox{for all~} t>0, ~\lim _{x \rightarrow \infty} \frac{1-F_{X}(t x)}{1-F_{X}(x)}=t^{-\gamma},
 	$$
 where $F$ is the distribution function of $X$.
We write $X\in {\rm R  V}_{\gamma}$ for this property.       
	As a consequence of $\mathbf X \in \mathrm{MRV}_{\gamma}(\Psi)$, $\|\mathbf{X} \|$ satisfies univariate regular variation   with the same  tail index $\gamma$. 

Regular variation is one of the basic notions for describing heavy-tailed distributions and dependence in the tails.  In what follows, we limit our discussion to $\mathbf X \in \mathrm{MRV}_{\gamma}(\Psi)$ under the non-degeneracy condition: 
 $$\Psi\left(\left\{\mathbf s\in \mathbb{S}^n: \mathbf {s}\in (0,\infty)^n\right\}\right) >0.$$
Note that if $\mathbf X \in \mathrm{MRV}_{\gamma}(\Psi)$ satisfies non-degeneracy condition, we have $\mathbf w^\top \mathbf X \in \mathrm{RV}_\gamma$ (See \cite{ME13}).

Let $\mathbf X \in \mathrm{MRV}_{\gamma}(\Psi)$ be  a random vector with identical marginals. If  $ X_1,\dots,X_n$   have a finite mean, then VaR is asymptotically  subadditive in the following sense (see e.g., \cite{ELW09})
	$$
	\VaR_{\alpha} \left(\sum_{i=1}^n X_i\right)   \le { \sum_{i=1}^n \VaR_{\alpha  }(X_i)} \mbox{~~~for $\alpha$ close enough to $0$},
	$$
	but the inequality is reversed if $ X_1,\dots,X_n$ do not have a finite mean. {Next, in contrast to Proposition \ref{pro:ind} and Remark \ref{rem:iid}, we will show that DQ based on VaR can be arbitrarily  close to $n$ even if the individual losses are iid. 	} 
	{
\begin{theorem}\label{thm:MRV}
		Suppose that  $\mathbf{X} \in \mathrm{MRV}_{\gamma}(\Psi)$ and $\mathbf{X}$ has positive joint density on the support of $\mathbf{X}$. Then, 
		\begin{equation}\label{eq:asy_MRV}
		\lim _{\alpha \downarrow 0} {\rm DQ}^{\VaR}_{\alpha}(\mathbf X)={\eta_{\mathbf 1}}{\left(\sum_{i=1}^{n} \eta_{\mathbf{e}_{i}}^{1 / \gamma}\right)^{-\gamma}},\end{equation} where $\eta_{\mathbf{x}}=\int_{\mathbb{S}^{n}}\left(\mathbf{x}^\top \mathbf{s}\right)_+^{\gamma} \Psi(\d \mathbf{s})$ for $\mathbf x\in \R^n$. 
   Moreover, if $X_1, \dots, X_n$ are iid  random variables, then $\mathrm{DQ}_\alpha^{\VaR}(\mathbf X) \to n^{1-\gamma}$ as $\alpha \downarrow 0$.
	\end{theorem}
 \begin{proof}
A more general result of \eqref{eq:asy_MRV} and its proof are  shown in Proposition \ref{prop:lim}, where the asymptotic behavior of $\mathrm{DQ}_\alpha^\VaR$ for weighted portfolios is investigated. 
Since DQ is scale-invariant, by  taking $\mathbf w=(1/n, \dots, 1/n)$ in Proposition \ref{prop:lim},  it gives
 \begin{align*}
 \lim _{\alpha \downarrow 0} {\rm DQ}^{\VaR}_{\alpha}(\mathbf X)=
  \lim _{\alpha \downarrow 0} {\rm DQ}^{\VaR}_{\alpha}(1/n X_1, \dots, 1/n X_n)=\frac{\eta_{\mathbf{w}} }{\left(\sum_{i=1}^{n} w_{i} \eta_{\mathbf{e}_{i}}^{1 / \gamma}\right)^\gamma}, 
 \end{align*}
where $\eta_{\mathbf w}=n^{-\gamma}\int_{\mathbb{S}^n}\left(\mathbf{1}^\top s\right)_+^\gamma \Psi(\d\mathbf{s})=n^{-\gamma}\eta_{\mathbf 1}$. As a result,  we have
$$\lim _{\alpha \downarrow 0} {\rm DQ}^{\VaR}_{\alpha}(\mathbf X)={\eta_{\mathbf 1}}{\left(\sum_{i=1}^{n} \eta_{\mathbf{e}_{i}}^{1 / \gamma}\right)^{-\gamma}}.$$

If $X_1, \dots, X_n$ are iid non-negative random variables, by Example 3.1 of \cite{ELW09}, we have 
 $$\eta_{\mathbf 1}^{1/\gamma}=\lim_{\alpha\downarrow 0}\frac{\VaR_\alpha\left(\sum_{i=1}^n X_i\right)}{\VaR_\alpha(X_1)}=n^{1/\gamma},$$ which implies that $\eta_{\mathbf 1}=n$. 
 Moreover, $$(\eta_{\mathbf e_i})^{1/\gamma}=\lim_{\alpha \downarrow 0}\frac{\VaR_{\alpha} (X_i)}{\VaR_{\alpha}(X_1)}=1.$$ Hence, $\lim _{\alpha \downarrow 0} {\rm DQ}^{\VaR}_{\alpha}(\mathbf X)=n^{1-\gamma}$. Further, if $\gamma\downarrow 0,$ then $ {\rm DQ}^{\VaR}_{\alpha}(\mathbf X)\to n$.
 \end{proof}
	}

The $\alpha$-CE model in Theorem \ref{th:var-01n} with ${\rm DQ}_\alpha^{\VaR} (\mathbf X)= n$  is complicated and involves both positive and negative dependence.
	Theorem \ref{thm:MRV} suggests that 
	${\rm DQ}_\alpha^{\VaR} (\mathbf X)\approx n$ can be obtained for some very heavy-tailed iid model with $\gamma$ close to $0$.
	Therefore, the upper bound $n$ on  ${\rm DQ}_\alpha^{\VaR}$ is  relevant when analyzing very heavy-tailed risks such as catastrophe losses; we refer to \cite{EKM97} for a general treatment of heavy-tailed risks in insurance and finance.
 
{
\begin{remark}
Suppose that  $X_1,\dots,X_n$ are iid random variables with  $X_1 \in \mathrm{RV}_{\gamma}$  having  positive density over its support. We have $\mathbf X=(X_1,\dots, X_n) \in \mathrm{MRV}_{\gamma}(\Psi)$ by \citet[Example 2.1.4]{KS20}, and thus ${\rm DQ}_\alpha^{\VaR} (\mathbf X)\to n^{1-\gamma}$ as $\alpha \downarrow 0$.
\end{remark}
}

{\begin{remark}
We note that the intersection between elliptical  distributions and MRV distributions is non-empty. For  $\mathbf X \sim \mathrm{E}_n(\mathbf \mu, \Sigma, \tau)$, we have  
$$\mathbf X \laweq \mu+RAU,$$
where $A \in \R^{n \times n}$ satisfying $AA^\top=\Sigma$, $U$ is uniformly distributed on the Euclidean sphere $\mathbb S_2^d$ and $R$ is a non-negative random variable that is independent of $U$.  Theorem 4.3 of 
 \cite{HL02} showed that $\mathbf X$ has   an MRV model if and only if $R \in \mathrm{RV}_{\gamma}$ for some $\gamma >0$. Assume that the elliptically distributed 
$\mathbf X$ is in  $\mathrm{MRV}_\gamma(\Psi)$  with $\gamma>0$. As a result, we have $Y \sim E_1(0,1,\tau) \in \mathrm{RV}_{\gamma}$. Let $f$ be the density of $Y$. Following  Proposition \ref{cor:VaR} (i) and the fact that  $\VaR_\alpha(Y)/\ES_\alpha (Y)\to (\gamma-1)/\gamma$ as $\alpha \downarrow 0$ for $Y\in \mathrm{RV}_\gamma$ with finite mean, we have 
$$\lim_{\alpha \downarrow 0} {\rm DQ}^{\ES}_{\alpha}(\mathbf X)=\lim_{\alpha \downarrow 0} {\rm DQ}^{\VaR}_{\alpha}(\mathbf X)=
				\lim_{x \to \infty}k_\Sigma \frac{f(k_\Sigma x)}{f(x)}=k_\Sigma^{-\gamma}.$$
If $\mathbf X$ follows an  elliptical distribution  in the MRV class, then ${\rm DQ}_\alpha^\ES(\mathbf X)$ has the same limit as ${\rm DQ}^{\VaR}_{\alpha}(\mathbf X)$.
For example, if  $\mathbf X \sim t(\nu, \mathbf \mu, \Sigma)$, we have $\mathbf X\in \mathrm{MRV}_{\gamma}(\Psi)$ with $\gamma =\nu$ as we have shown in \eqref{eq:limit-t} that $\lim_{\alpha \downarrow 0} {\rm DQ}^{\VaR}_{\alpha}(\mathbf X)=\lim_{\alpha \downarrow 0} {\rm DQ}^{\ES}_{\alpha}(\mathbf X)=k_\Sigma^{-\nu}$.
 \end{remark}

To end this section, we show that if there exists an asset with a strictly heavier tail than the other assets in the portfolio, then   DQ based on VaR tends to 1 as $\alpha \downarrow 0$.
\begin{proposition}\label{prop:RV}
Suppose  $X_i \in \mathrm{RV}_{\gamma_i}$ for $i\in [n]$  such that $\gamma_1<\min_{i=2, \dots, n}\gamma_i$. If $X_1, \dots, X_n$ have positive densities on their support,  then
$\lim_{\alpha \downarrow 0} \mathrm{DQ}_\alpha^\VaR(\mathbf X)=1.$
\end{proposition}
\begin{proof}
Since $\gamma_1<\min_{i=2, \dots, n}\gamma_i$,  $X_1$ has a heavier tail than $X_2, \dots, X_n$. As a result,  we have $\sum_{i=1}^n X_i \in \mathrm{RV}_{\gamma_1}$  regardless of the dependence between all random variables (See \citet[Lemma 1.3.2]{KS20}), that is, 
$$\lim_{x \to \infty} \frac{\p\left(\sum_{i=1}^n X_i>x\right)}{\p(X_1>x)}=1.$$
Moreover, $X_1$ having a heavier tail than $X_2, \dots, X_n$ also implies that 
$\lim_{\alpha \downarrow 0} {\VaR_{\alpha}(X_i)}/{\VaR_{\alpha}(X_1)}=0$ for all $i=2, \dots, n$, and thus $\lim_{\alpha \downarrow 0} \sum_{i=1}^n \VaR_{\alpha}(X_i)/\VaR_{\alpha}(X_1)=1$.
Therefore, we have 
\begin{align*}
\lim_{\alpha \downarrow 0}
\mathrm{DQ}_\alpha^\VaR(\mathbf X)&=\lim_{\alpha \downarrow 0} \frac{\p\left(\sum_{i=1}^n X_i>\sum_{i=1}^n \VaR_{\alpha}(X_i)\right)}{\alpha}\\
&=\lim_{\alpha \downarrow 0} \frac{\p\left(\sum_{i=1}^n X_i>\sum_{i=1}^n \VaR_{\alpha}(X_i)\right)}{\p(X_1>\VaR_\alpha(X_1))}\\
&=\lim_{\alpha \downarrow 0} \frac{\p\left(\sum_{i=1}^n X_i>\sum_{i=1}^n \VaR_{\alpha}(X_i)\right)}{\p(X_1>\sum_{i=1}^n \VaR_\alpha(X_i))}\frac{\p(X_1>\sum_{i=1}^n \VaR_\alpha(X_i))}{\p(X_1>\VaR_\alpha(X_1))}\\
&=\lim_{\alpha \downarrow 0}\left(\frac{\sum_{i=1}^n\VaR_{\alpha}(X_i)}{\VaR_\alpha(X_1)}\right)^{-\gamma_1}=1.
\end{align*}
Thus, we get the desired result.
\end{proof}
Proposition \ref{prop:RV}  illustrates the intuitive fact that, if the tail of one asset  is  strictly heavier than the   others, then the portfolio has no diversification in the tail region, i.e., as $\alpha \downarrow 0$.

	\section{Optimization for the elliptical  models and MRV models}\label{sec:5}
	 We analyze portfolio diversification for a 
	random vector  $\mathbf{X}\in\X^n$ representing losses from  $n$ assets and a vector $\mathbf w= (w_{1},  \dots, w_{n}) \in \Delta_{n}$ of portfolio weights, where  $$  \Delta_{n}:=\left\{\mathbf{x} \in[0,1]^{n}: x_{1} +\dots+x_{n}=1\right\}.$$ 
	The total loss  of the portfolio is  $\mathbf{w}^\top \mathbf{X}$. 
	We write $\mathbf w \odot \mathbf X=\left(w_1X_1,\dots,w_nX_n\right)$ which is the portfolio loss vector with the weight $\mathbf w$.
	For a portfolio selection problem, we need to treat  
	$\mathrm{DQ}^\rho_\alpha(\mathbf w \odot \mathbf X)$ as a function of the portfolio weight $\mathbf w$.   

	\cite{HLW22} studied  the following optimization diversification problem
	\begin{equation}\label{eq:optimal_DQ}
		\min _{\mathbf{w} \in \Delta_n}  {\rm DQ}^{\VaR}_{\alpha}(\mathbf w \odot \mathbf X) \mbox{~~~and~~~} \min _{\mathbf{w} \in \Delta_n}  {\rm DQ}^{\ES}_{\alpha}(\mathbf w \odot \mathbf X);
	\end{equation}
	for general $\mathbf X$.  
	 Moreover,   efficient algorithms are obtained to  optimize  ${\rm DQ}^{\VaR}_{\alpha}$ and ${\rm DQ}^{\ES}_{\alpha}$ in real-data applications; see their Sections 6.2 and 7.  
In this section, we focus on the portfolio optimization problems for  elliptical and MRV models.

 For the elliptical  models, 	the optimization of $\mathrm{DQ}^\VaR_\alpha$, $\mathrm{DQ}^\ES_\alpha$   boils down to maximizing $k_{\mathbf w \Sigma \mathbf w^\top}$ in \eqref{eq:k} since DQ of $\mathbf w\odot \mathbf X$ is decreasing in $k_{\mathbf w \Sigma \mathbf w^\top}$. 
	We assume that $\Sigma$ is invertible, and write
	$\Sigma=(\sigma_{ij})_{n\times n},$  with diagonal entries $\sigma_{ii}=\sigma_i^2$, $i\in [n]$, and $\boldsymbol \sigma=(\sigma_{1},\dots,\sigma_{n})$.
	Note that
	$$
	k_{\mathbf w \Sigma \mathbf w^\top} =\frac{\mathbf w^\top \boldsymbol{\sigma}}{ \sqrt{\mathbf w^\top\Sigma \mathbf w}},
	$$
	and we immediately give the  optimizer of \eqref{eq:optimal_DQ}  for the elliptical  models.

 \begin{theorem}\label{th:opt_elli}
	Suppose that  $\mathbf X \sim  \mathrm{E}_{n}(\boldsymbol{\mu}, \Sigma, \tau)$, 
	$\Sigma$ is invertible and $\alpha \in (0,1/2)$, then the vector
	\begin{equation}\label{eq:opt-w}
		\mathbf w^*=\argmax_{\mathbf w\in \Delta_n} \frac{\mathbf w^\top \boldsymbol{\sigma}}{ \sqrt{\mathbf w^\top\Sigma \mathbf w}}
	\end{equation}
	minimizes \eqref{eq:optimal_DQ}, that is, \begin{equation}\label{eq:opt}
		\min_{\mathbf w \in \Delta_n} {\rm DQ}^{\rho}_{\alpha}(\mathbf w \odot \mathbf X)={\rm DQ}^{\rho}_{\alpha}(\mathbf w^* \odot \mathbf X)
	\end{equation} for  $\rho$ being  $\VaR$ or  $\ES$.
	\end{theorem}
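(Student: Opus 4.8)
The plan is to reduce both minimization problems in \eqref{eq:optimal_DQ} to the single scalar maximization \eqref{eq:opt-w} by combining the explicit formulas and the monotonicity in $k_\Sigma$ already established in Proposition \ref{prop:comp_Dvar}. The guiding idea is that the weighted loss vector $\mathbf w\odot\mathbf X$ stays within the same elliptical family, so its DQ depends on the weights $\mathbf w$ only through one scalar, in which DQ is monotone.

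First I would invoke the closure of elliptical laws under linear maps. Writing $\mathbf W=\mathrm{diag}(w_1,\dots,w_n)$, we have $\mathbf w\odot\mathbf X=\mathbf W\mathbf X\sim\mathrm{E}_n(\mathbf W\boldsymbol\mu,\mathbf W\Sigma\mathbf W,\tau)$ with the \emph{same} generator $\tau$; hence the reference variable $Y\sim\mathrm{E}_1(0,1,\tau)$, its distribution function $F$, and its superquantile transform $\widetilde F$ are all independent of $\mathbf w$, and only the constant $k$ in \eqref{eq:k} varies. A short computation on $\Sigma'=\mathbf W\Sigma\mathbf W$, whose entries are $\sigma'_{ij}=w_iw_j\sigma_{ij}$, gives $\sigma'_i=w_i\sigma_i$ (using $w_i\ge0$ on $\Delta_n$), whence $\sum_i\sigma'_i=\mathbf w^\top\boldsymbol\sigma$ and $\sum_{i,j}\sigma'_{ij}=\mathbf w^\top\Sigma\mathbf w$, so that
$$
k_{\mathbf W\Sigma\mathbf W}=\frac{\mathbf w^\top\boldsymbol\sigma}{\sqrt{\mathbf w^\top\Sigma\mathbf w}},
$$
recovering the identity for $k_{\mathbf w\Sigma\mathbf w^\top}$ stated just before the theorem, which is exactly the objective maximized in \eqref{eq:opt-w}.

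With this identity in hand, Proposition \ref{prop:comp_Dvar} yields ${\rm DQ}^\VaR_\alpha(\mathbf w\odot\mathbf X)=(1-F(k_{\mathbf W\Sigma\mathbf W}\VaR_\alpha(Y)))/\alpha$ and the analogous formula for $\ES$, in which $F,\widetilde F,\VaR_\alpha(Y),\ES_\alpha(Y)$ are constants in $\mathbf w$. Since $\alpha\in(0,1/2)$ forces $\VaR_\alpha(Y)\ge0$ (and $\ES_\alpha(Y)\ge0$ for every $\alpha$), parts (ii) and (iii) of Proposition \ref{prop:comp_Dvar} show that both ${\rm DQ}^\VaR_\alpha$ and ${\rm DQ}^\ES_\alpha$ are decreasing in $k_{\mathbf W\Sigma\mathbf W}$. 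I would then conclude that minimizing either DQ over $\mathbf w\in\Delta_n$ is equivalent to maximizing $k_{\mathbf W\Sigma\mathbf W}$, so that the maximizer $\mathbf w^*$ of \eqref{eq:opt-w} delivers \eqref{eq:opt} for $\rho=\VaR$ and $\rho=\ES$ simultaneously.

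I do not anticipate a substantial obstacle, as the argument is essentially a bookkeeping reduction resting on results already proved. The two points I would take care over are: first, verifying the invariance of the generator $\tau$ (hence of $Y,F,\widetilde F$) under the diagonal scaling $\mathbf W$, which is precisely the closure property of elliptical distributions and is what makes the reduction to a single scalar possible; and second, confirming that the restriction $\alpha<1/2$ is exactly what fixes the sign $\VaR_\alpha(Y)\ge0$ and thereby the direction of monotonicity in Proposition \ref{prop:comp_Dvar}(ii). Existence of $\mathbf w^*$ is immediate, since $\Delta_n$ is compact and, with $\Sigma$ invertible, $\mathbf w\mapsto\mathbf w^\top\boldsymbol\sigma/\sqrt{\mathbf w^\top\Sigma\mathbf w}$ is continuous on $\Delta_n$.
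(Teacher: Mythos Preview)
Your proposal is correct and follows essentially the same approach as the paper: the paper's argument, given in the paragraph immediately preceding the theorem, is precisely that $\mathbf w\odot\mathbf X$ remains elliptical with the same generator, that $k_{\mathbf W\Sigma\mathbf W}=\mathbf w^\top\boldsymbol\sigma/\sqrt{\mathbf w^\top\Sigma\mathbf w}$, and that the monotonicity in Proposition~\ref{prop:comp_Dvar} then reduces the DQ minimization to maximizing this ratio. You have supplied more detail than the paper (in particular the explicit verification of the $k$-identity and the role of $w_i\ge0$), but the logic is identical.
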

	
	The optimization problem  \eqref{eq:opt-w} is well studied in the literature, and the existence   and  uniqueness   of the solution can be verified if ${\Sigma}$ is invertible, see, e.g.~\cite{CC08}. Note that the optimizer for problem \eqref{eq:opt} does not depend on the tail  probability level $\alpha$.
	It is straightforward to see that
	$$ \argmin_{\mathbf w \in \Delta_n} {\rm DR}^{\rho_\alpha}(\mathbf w \odot \mathbf X)=\argmax_{\mathbf w\in \Delta_n} \frac{\mathbf w^\top\boldsymbol{\mu}+ \mathbf w^\top \boldsymbol{\sigma}\rho_{\alpha} (Y)}{\mathbf w^\top\boldsymbol{\mu}+\sqrt{\mathbf w^\top\Sigma \mathbf w}\rho_{\alpha} (Y)}$$ for $\rho$ being  $\VaR$ or  $\ES$ and $Y\sim \mathrm{E}_1(0,1,\tau)$. This optimizer is the same as that of  \eqref{eq:opt} if  $\boldsymbol{\mu}=\mathbf 0$.
	This shows that for centered elliptical models, optimizing DQ and optimizing DR are equivalent problems, both of which are further  equivalent to optimizing DR based on SD (assuming it exists).  This is intuitive as for a fixed $\tau$, centered elliptical distributions are parameterized by their dispersion matrices. 
	
	\begin{example} 	Assume  that $\mathbf{X}\sim \mathrm  t(\nu,\boldsymbol{\mu},\Sigma)$ where $\nu=3$ and  the dispersion matrix is  given by  $$\Sigma=\left(\begin{array}{cc}1 & 0.5 \\ 0.5 & 2\end{array}\right).$$
		Clearly, DQ does not depend on $\boldsymbol \mu$.  We show the curves of    ${\rm DQ}^{\VaR}_{\alpha}(\mathbf w \odot \mathbf X)$
		and   ${\rm DQ}^{\ES}_{\alpha}(\mathbf w \odot \mathbf X)$
		against the weight $w_1$ with various values of  $\alpha=0.001, 0.01, 0.025, 0.05$.  It can be anticipated  from \eqref{eq:opt-w} that although DQ depends on $\alpha$, the optimizer does not.  By solving \eqref{eq:opt-w},  we  get $w^*_1=0.5860$ and $w_2^*=0.4140$, which corresponds to the observations in Figure \ref{fig:ellip2}. Recall  that ${\rm DQ}^{\ES}_{\alpha}$ is quite flat when $\alpha$ varies in Figure \ref{fig:ellip}, and hence curves of ${\rm DQ}^{\ES}_{\alpha}(\mathbf w \odot \mathbf X)$ look similar for different $\alpha$.
		\begin{figure}[htb!]
			\caption{Values of ${\rm DQ}^{\VaR}_{\alpha}(\mathbf w \odot \mathbf X)$ and ${\rm DQ}^{\ES}_{\alpha}(\mathbf w \odot \mathbf X)$ for $w_1\in[0,1]$}\label{fig:ellip2}
			\centering
			\includegraphics[width=14cm]{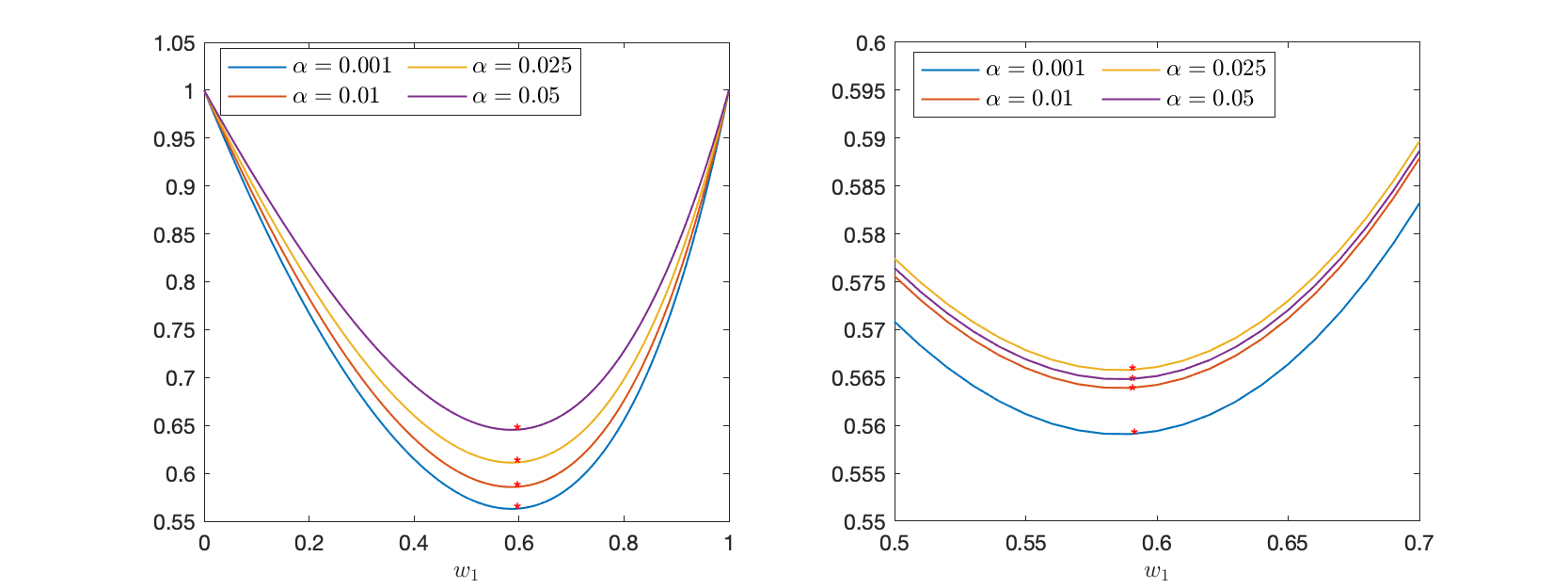}
		\end{figure}
	\end{example}

	%

	Next, we turn to the MRV model. The following result gives the limit of DQ of the portfolio
	$\mathbf w \odot \mathbf X$
	where $\mathbf X$ follows an MRV model. {Due to the same reason stated in Section \ref{sec:MRV}, 
 we only present the case of VaR.}
 In the proofs below, for any positive functions $f$ and $g$, we write
	$
	f(x) \simeq g(x)  \text { as } x \rightarrow x_{0}
	$
	to represent  
	$
	\lim _{x \rightarrow x_{0}}  {f(x)}/{g(x)}=1
	$. 
	
	\begin{proposition}\label{prop:lim}
		Suppose that  $\mathbf{X} \in \mathrm{MRV}_{\gamma}(\Psi)$ and $\mathbf{X}$ has positive joint density on the support of $\mathbf{X}$. Then, for $\mathbf w\in \Delta_n$,
		$$
		\lim _{\alpha \downarrow 0} {\rm DQ}^{\VaR}_{\alpha}(\mathbf w \odot \mathbf X)=f(\mathbf w),$$ where  $f(\mathbf{w})=\eta_{\mathbf{w}}/\left(\sum_{i=1}^{n} w_{i} \eta_{\mathbf{e}_{i}}^{1 / \gamma}\right)^\gamma
		$ and $\eta_{\mathbf{x}}=\int_{\mathbb{S}^{n}}\left(\mathbf{x}^\top \mathbf{s}\right)_+^{\gamma} \Psi(\d \mathbf{s})$ for $\mathbf x\in \R^n$.
	\end{proposition}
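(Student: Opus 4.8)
The plan is to express $\mathrm{DQ}^{\VaR}_\alpha(\mathbf w\odot\mathbf X)=\alpha^*/\alpha$ through the equation that defines $\alpha^*$, translate the MRV condition into regular variation of the relevant VaR's, and then invert. Write $S=\mathbf w^\top\mathbf X=\sum_{i=1}^n w_iX_i$ for the aggregate portfolio loss. Because $\mathbf X$ has a positive joint density, $\beta\mapsto\VaR_\beta(S)$ is continuous and strictly decreasing, so (exactly as in the proof of Theorem~\ref{thm:MRV}, via \eqref{eq:superquantile}) the level $\alpha^*$ is characterized by the identity $\VaR_{\alpha^*}(S)=\sum_{i=1}^n \VaR_\alpha(w_iX_i)=\sum_{i=1}^n w_i\VaR_\alpha(X_i)$, where the last step uses positive homogeneity. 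Thus it suffices to track the asymptotics of $\alpha^*$ as $\alpha\downarrow 0$.

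Next I would invoke the standard tail asymptotic for linear functionals of an MRV vector (see \cite{ELW09} and \cite{MFE15}): for any $\mathbf a\in\R^n$ with $\eta_{\mathbf a}>0$, one has $\p(\mathbf a^\top\mathbf X>x)\simeq \eta_{\mathbf a}\,\p(\|\mathbf X\|>x)$ as $x\to\infty$. Since $\|\mathbf X\|\in\mathrm{RV}_\gamma$, inverting this regularly varying tail gives $\VaR_\alpha(\mathbf a^\top\mathbf X)\simeq \eta_{\mathbf a}^{1/\gamma}\VaR_\alpha(\|\mathbf X\|)$ as $\alpha\downarrow 0$. Applying this with $\mathbf a=\mathbf w$ for the numerator and with $\mathbf a=\mathbf e_i$ for each marginal, while using the homogeneity $\eta_{c\mathbf x}=c^\gamma\eta_{\mathbf x}$, I obtain
$$
\frac{\VaR_\alpha(S)}{\sum_{i=1}^n w_i\VaR_\alpha(X_i)}\longrightarrow \frac{\eta_{\mathbf w}^{1/\gamma}}{\sum_{i=1}^n w_i\eta_{\mathbf e_i}^{1/\gamma}}=:R\quad\text{as }\alpha\downarrow 0.
$$

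The final step is the inversion that converts this ratio of quantiles into the ratio of levels $\alpha^*/\alpha$. Let $V(\alpha)=\VaR_\alpha(S)$ and note that $\bar F_S\in\mathrm{RV}_{-\gamma}$ with $\bar F_S(V(\alpha))=\alpha$. Combining the displayed limit with the defining identity for $\alpha^*$ yields $\VaR_{\alpha^*}(S)=\sum_i w_i\VaR_\alpha(X_i)=(1+o(1))\,V(\alpha)/R$, so
$$
\frac{\alpha^*}{\alpha}=\frac{\bar F_S(\VaR_{\alpha^*}(S))}{\bar F_S(V(\alpha))}=\frac{\bar F_S\big((1+o(1))V(\alpha)/R\big)}{\bar F_S(V(\alpha))}\longrightarrow (1/R)^{-\gamma}=R^\gamma,
$$
where the limit uses the Uniform Convergence Theorem for regularly varying functions (the ratio of the arguments tends to $1/R$ while $V(\alpha)\to\infty$). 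Since $R^\gamma=\eta_{\mathbf w}/\big(\sum_{i=1}^n w_i\eta_{\mathbf e_i}^{1/\gamma}\big)^\gamma=f(\mathbf w)$, this gives $\mathrm{DQ}^{\VaR}_\alpha(\mathbf w\odot\mathbf X)\to f(\mathbf w)$.

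The main obstacle is precisely this last inversion: because $\alpha^*$ depends on $\alpha$, the quantity $\alpha^*/\alpha$ is not a fixed constant, so one cannot simply substitute a fixed ratio into the pointwise regular-variation relation; the Uniform Convergence Theorem is what legitimately lets the limit pass through. The only other nontrivial input is the linear-combination tail asymptotic $\p(\mathbf a^\top\mathbf X>x)\simeq\eta_{\mathbf a}\p(\|\mathbf X\|>x)$, which is the deepest ingredient but is available off the shelf from MRV theory. I would also record the harmless bookkeeping that the stated spectral integral is taken over the relevant (nonnegative) part of the sphere, consistent with the loss interpretation of $\mathbf X$, and that the edge case $w_i=0$ causes no difficulty since then $\VaR_\alpha(w_iX_i)=0=w_i\eta_{\mathbf e_i}^{1/\gamma}$.
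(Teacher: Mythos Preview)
Your proof is correct and follows essentially the same route as the paper: establish $\VaR_{\alpha^*}(S)=\sum_i w_i\VaR_\alpha(X_i)$ from continuity, use the MRV linear-combination asymptotic (the paper quotes Lemma~2.2 of \cite{ME13}) to get the ratio $\VaR_\alpha(S)/\VaR_{\alpha^*}(S)\to R$, and then invert via regular variation of the tail of $S$. Your inversion through $\bar F_S$ and the explicit appeal to the Uniform Convergence Theorem is in fact more carefully justified than the paper's version, which substitutes the $\alpha$-dependent ratio $c=\alpha^*/\alpha$ directly into the pointwise relation $\VaR_\alpha(S)/\VaR_{c\alpha}(S)\simeq c^{1/\gamma}$ without comment.
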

	\begin{proof}
		If  $\mathbf{X} \in \mathrm{MRV}_{\gamma}(\Psi)$ with $\gamma \in (0,1)$, we have (Lemma  2.2 of \cite{ME13})
		$$
		\lim _{\alpha \downarrow 0}\frac{ \VaR_{\alpha} \left(\sum_{i=1}^n w_iX_i\right) }{ \VaR_{\alpha} \left(\|\mathbf{X}\|_1\right)} =  \eta_{\mathbf{w}}^{1 / \gamma},
		$$
		and
		$$
		\lim _{\alpha \downarrow 0}\sum_{i=1}^n \frac{w_i \VaR_{\alpha} \left(X_i\right) }{ \VaR_{\alpha} \left(\|\mathbf{X}\|_1\right)} = \sum_{i=1}^n w_i\eta_{\mathbf{e}_i}^{1 / \gamma},
		$$
		where $\|\mathbf{X}\|_1=\sum_{i=1}^n |X_i|$.
		As $\mathbf{X}$ has positive joint density, $\VaR_\alpha$ is continuous for $\sum_{i=1}^n w_i X_i$. Then we have $\VaR_{\alpha^*}(\sum_{i=1}^n w_i X_i)=\sum_{i=1}^n w_i\VaR_{\alpha} (X_i)$. Thus, it follows that
		$$
		\frac{ \VaR_{\alpha} \left(\sum_{i=1}^n w_iX_i\right) }{\VaR_{\alpha^*}(\sum_{i=1}^n w_iX_i)} \to  \frac{\eta_{\mathbf{w}}^{1 / \gamma}}{\sum_{i=1}^{n} w_{i} \eta_{\mathbf{e}_{i}}^{1 / \gamma}}
		\mbox{~~~~as } \alpha \downarrow 0.
		$$ Since $\sum_{i=1}^nw_i X_i\in {\rm RV}_\gamma$, for $c>0$,
		$$\frac{ \VaR_{\alpha} \left(\sum_{i=1}^nw_i X_i\right) }{\VaR_{c\alpha} \left(\sum_{i=1}^n w_i X_i\right) }\simeq \left(\frac{1 }{c
		}\right)^{-1/\gamma}
		\mbox{~~~~as }  \alpha \downarrow 0.$$
		Let $c=\alpha^*/\alpha$, we have
		$$ \left(\frac{\alpha}{\alpha^*
		}\right)^{-1/\gamma} \to \frac{\eta_{\mathbf{w}}^{1 / \gamma}}{\sum_{i=1}^{n} w_{i} \eta_{\mathbf{e}_{i}}^{1 / \gamma}}.$$
		Hence,
		$$
		{\rm DQ}^{\VaR}_{\alpha}(\mathbf w \odot \mathbf X) = \frac{\alpha^*}{\alpha}\to \frac{\eta_{\mathbf{w}}}{\left(\sum_{i=1}^{n} w_{i} \eta_{\mathbf{e}_{i}}^{1 / \gamma}\right)^\gamma}.
		$$
		The desired result is obtained.
	\end{proof}

	Proposition \ref{prop:lim} allows us to approximately optimize $\mathrm{DQ}_\alpha^\VaR$ by minimizing $f(\mathbf w)$.  For $\mathbf{X} \in \operatorname{MRV}_{\gamma}(\Psi)$ with $\gamma >1$, by assuming  $\Psi\left(\left\{\mathbf{x} \in \R^n: \mathbf{a}^\top \mathbf{x}=0\right\}\right)=0$ for any $\mathbf{a} \in \mathbb{R}^{n}$, which means that all components  are relevant for the extremes of $\mathbf X$,   the existence  and uniqueness of  $\mathbf{w}^{*}=\argmin _{\mathbf{w} \in \Delta_{n}} f(\mathbf w)$   are  guaranteed.  In fact, the existence of $\mathbf w^{*}$ is due to the continuity of $f(\mathbf w)$ and the compactness of $\Delta_{n}$. To show   uniqueness, we can rewrite  the above minimization problem  as
	$$
	\min _{\mathbf{w}\in\Delta_n} \eta_{\mathbf{w}}
	~~~~\text{s.t.}~~~~\sum_{i=1}^{d} w_{i} \eta_{\mathbf{e}_{i}}^{1 / \gamma}=1.$$  Note that   the set of constraints is compact and  $\eta_{\mathbf{w}}$ is strictly convex, and hence $\mathbf{w}^{*}$ is unique.
	
	\begin{example} Assume that  $Y_{1}$ and $Y_{2}$ are iid following a standard t-distribution with  $\nu>1$ degrees of freedom. A random vector $\mathbf{X}=\left(X_{1}, X_{2}\right)$ is defined as
		$$
		\mathbf{X}=A \mathbf{Y} \quad \text{with}~~ A=\left(\begin{array}{cc}
			1 & 0 \\
			r & \sqrt{1-r^{2}}
		\end{array}\right).
		$$
		The random vectors $\mathbf X$ and $\mathbf Y$ are not elliptically distributed.
		Using the results in \cite{ME13},  we have
		$$
		\frac{\eta_{\mathbf{w}}}{\eta_{\mathbf{1}_{1}}}=\left(w_{1}+w_{2} r\right)^{\nu}+\left(w_{2} \sqrt{1-r^{2}}\right)^{\nu},
		$$
		and
		$$
		\frac{\eta_{\mathbf{w}}}{\eta_{\mathbf{1}_{2}}}=\frac{\left(w_{1}+w_{2} r\right)^{\nu}+\left(w_{2} \sqrt{1-r^{2}}\right)^{\nu}}{r^{\nu}+\sqrt{1-r^{2}}^{\nu}}.
		$$
		Hence,  $$
		f(\mathbf w)=\!\left(w_{1}\left(\left(w_{1}+w_{2} r\right)^{\nu}\!+\left(w_{2} \sqrt{1\!-r^{2}}\right)^{\nu}\right)^{-\frac{1}{\nu}}\!+w_{2}\left(\frac{\left(w_{1}\!+w_{2} r\right)^{\nu}\!+\left(w_{2} \sqrt{1\!-r^{2}}\right)^{\nu}}{r^{\nu}\!+\sqrt{1-r^{2}}^\nu}\right)^{\!-\frac{1}{\nu}}\right)^{-\nu}.
		$$
		Take  $r=0.3$.
		We show the curves of  ${\rm DQ}^{\VaR}_{\alpha}(\mathbf w \odot \mathbf X)$ against  $w_{1}$ for   $\alpha=0.001, 0.01, 0.025$ and  $\nu=2, 4$. Also, we use $f(\mathbf w)$ to approximate  ${\rm DQ}^{\VaR}_{\alpha}(\mathbf w \odot \mathbf X)$ as $\alpha$ tends to 0. From Figure \ref{fig:MRV}, we can see that  the optimizer  $w^*_1$ is converging to the one that maximizes $f(\mathbf w)$ as $\alpha$ tends to $0$.
		\begin{figure}[htb!]
			\caption{Values of ${\rm DQ}^{\VaR}_{\alpha}(\mathbf w \odot \mathbf X)$ with $\nu=2$ (left) and $\nu=4$ (right)}\label{fig:MRV}
			\includegraphics[width=15cm]{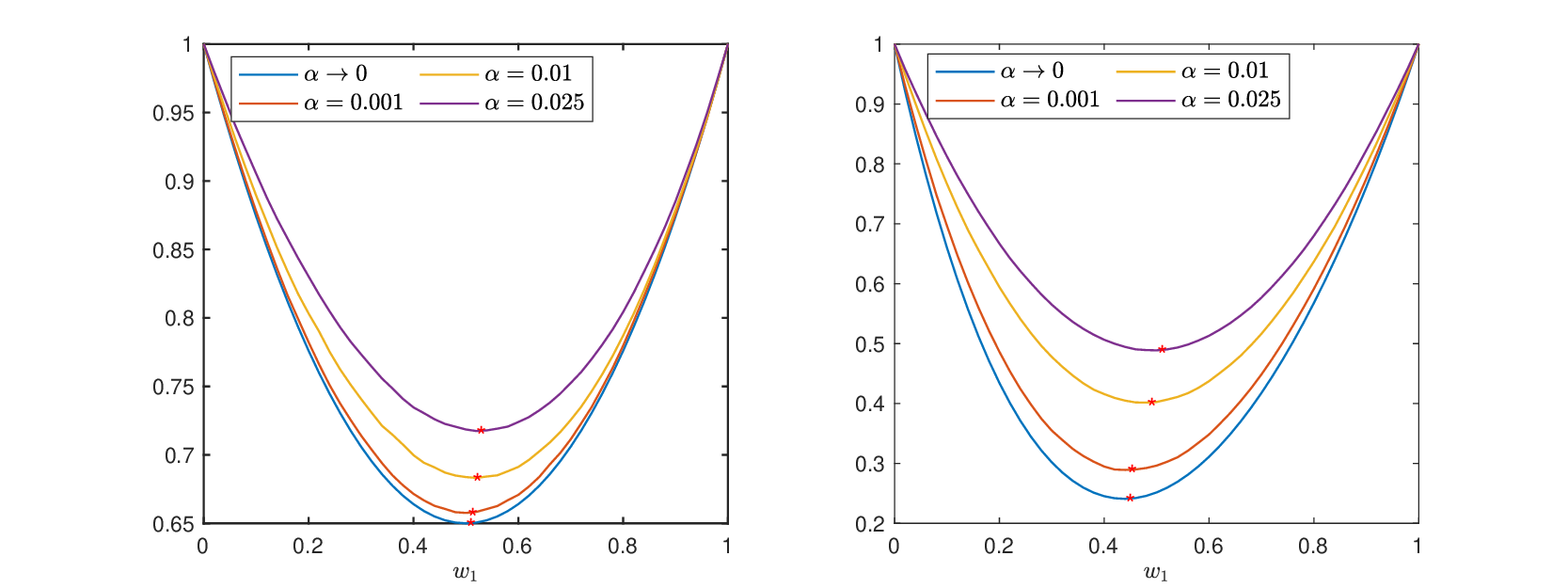}
		\end{figure}
	\end{example}

	{
 \begin{remark} 
 Some negative dependence concepts yield small values of DQ. The joint mix dependence usually leads to a zero DQ as we see in Theorem \ref{th:var-01n} (ii). The negative dependence concept of \cite{LA14}, weaker than joint mix dependence,  does not necessarily lead to a small value of DQ. For instance,  the portfolio vector $\textbf{X}=(X,-\epsilon X)$
 is counter-monotonic for $\epsilon>0$, but its DQ can be close to $1$ for small $\epsilon$.
 In particular,  we have $\mathrm{DQ}_\alpha^{\VaR}(\textbf X)\approx0.9333$ and $\mathrm{DQ}_\alpha^{\ES}(\textbf X)\approx0.9044$ for $\alpha=0.05$ and $\epsilon=0.01$ when $X$ follows a standard normal distribution.   
\end{remark}
 }

	\section{Conclusion}\label{sec:6}
	
	The DQs based on VaR and ES are investigated in this paper, following the theory of DQ in \cite{HLW22}.  In particular,  for  elliptical and MRV  models, these DQs have simple forms. Comparisons between DQ and  DR illustrate  some attractive features of DQ. These results  enhance the theory and applications of DQ.
	
	We summarize some features below. (i) In cases of VaR and ES, DQs have simple formulas, in a way comparable to DRs.  (ii) DQs based on VaR and ES take values in  bounded intervals and have  natural ranges of  $[0, n]$  and  $[0, 1]$, respectively.  The special values $0$, $1$ and $n$ which correspond to   special dependence structures can be constructed. (iii)  DQs based on VaR and ES for elliptical distributions and MRV models  have  convenient expressions   and it can capture heavy tails in an intuitive way.   (iv)   Portfolio optimization  for elliptical models   boils down to a well-studied problem in the literature. For  centered elliptical models, optimizing DQ and optimizing DR are equivalent problems.  
 
	We discuss some future directions for the research of DQ. As a potential alternative to ES, expectiles (\cite{BKMR14}) have received increasing attention in the recent literature; indeed, they are the only elicitable coherent risk measures (\cite{Z16}). It would be  interesting to formulate DQ based on expectiles and  investigate its  properties that are different from DQ based on ES or VaR. As another interesting class of risk measures, the optimized certainty equivalents (\cite{BT07}) are introduced from decision-theoretic criterion based
on utility functions. It would be useful to construct DQ based on utility functions or optimized certainty equivalents and analyze the decision-theoretic implications of DQ.

\vspace{7mm}	
	\textbf{Acknowledgements.}  The authors  thank the editor and two anonymous referees for constructive comments. The authors also thank Fabio Bellini,  Ale\v s \v Cern\'y, An Chen, Jan Dhaene, Paul Embrechts, Paul Glasserman, 
	Pablo Koch-Medina, Dimitrios Konstantinides,  Henry Lam, Jean-Gabriel Lauzier, Fabio Macherroni, Massimo Marinacci, Cosimo Munari, Alexander Schied,  Qihe Tang, Stan Uryasev, Peter Wakker, and Zuoquan Xu for helpful comments and discussions on various issues of DQ. 
Xia Han is supported by the Fundamental Research Funds for the Central Universities, Nankai University (Grant No. 63231138).
Liyuan Lin is supported by the Hickman Scholarship from the Society of Actuaries. 
	Ruodu Wang is supported by the Natural Sciences and Engineering Research Council of Canada (RGPIN-2018-03823, RGPAS-2018-522590) and the Sun Life Research Fellowship.



	\setcounter{table}{0}
	\setcounter{figure}{0}
	\setcounter{equation}{0}
	\renewcommand{\thetable}{EC.\arabic{table}}
	\renewcommand{\thefigure}{EC.\arabic{figure}}
	\renewcommand{\theequation}{EC.\arabic{equation}}
	
	\setcounter{theorem}{0}
	\setcounter{proposition}{0}
	\renewcommand{\thetheorem}{EC.\arabic{theorem}}
	\renewcommand{\theproposition}{EC.\arabic{proposition}}
	\setcounter{lemma}{0}
	\renewcommand{\thelemma}{EC.\arabic{lemma}}
	
	\setcounter{corollary}{0}
	\renewcommand{\thecorollary}{EC.\arabic{corollary}}
	
	\setcounter{remark}{0}
	\renewcommand{\theremark}{EC.\arabic{remark}}
	\setcounter{definition}{0}
	\renewcommand{\thedefinition}{EC.\arabic{definition}}


\end{document}